\newcommand\BibTeX{{\rmfamily B\kern-.05em \textsc{i\kern-.025em b}\kern-.08emT\kern-.1667em\lower.7ex\hbox{E}\kern-.125emX}}
\def\norm#1{\|#1\|}
\def\real{\mathbb{R}}
\def\gauss{\mathcal{N}}
\def\mse{{\rm MSE}}
\def\var{{\rm Var}}
\def\bias{{\rm Bias}}
\def\ess{{\rm ESS}}
\def\iact{{\rm IACT}}
\def\corr{{\rm corr}}
\def\vecx{{\bf x}}
\def\vecy{{\bf y}}
\def\vecd{{\bf d}}
\def\data{{\bf d}_{\rm obs}}
\def\vecu{{\bf u}}
\def\vecq{{\bf q}}
\def\vece{{\bf e}}
\def\vecf{{\bf f}}
\def\vect{{\bf t}}
\def\matA{{\rm \bf A}}
\def\matC{{\rm \bf C}}
\def\matV{{\rm \bf V}}
\def\dhell{d_{\rm Hell}}
\theoremstyle{plain}
\newtheorem{theorem}{Theorem}
\newtheorem{assumption}[theorem]{Assumption}
\newtheorem{lemma}[theorem]{Lemma}
\newtheorem{definition}[theorem]{Definition}
\newtheorem{proposition}[theorem]{Proposition}
\newcolumntype{L}[1]{>{\raggedright\let\newline\\\arraybackslash\hspace{0pt}}m{#1}}
\newcolumntype{C}[1]{>{\centering\let\newline\\\arraybackslash\hspace{0pt}}m{#1}}
\newcolumntype{R}[1]{>{\raggedleft\let\newline\\\arraybackslash\hspace{0pt}}m{#1}}
\title{Data-Driven Model Reduction for the Bayesian Solution of Inverse Problems}
\author{Tiangang Cui\affil{1}, Youssef M.\ Marzouk\affil{1}, Karen E.\ Willcox\affil{1}\corrauth}
\address{\affilnum{1}Department of Aeronautics and Astronautics, Massachusetts Institute of Technology, Cambridge, MA 02139, USA}
\begin{document}

\begin{abstract}
One of the major challenges in the Bayesian solution of inverse problems governed by partial differential equations (PDEs) is the computational cost of repeatedly evaluating numerical PDE models, as required by Markov chain Monte Carlo (MCMC) methods for posterior sampling.
This paper proposes a data-driven projection-based model reduction technique to reduce this computational cost.
The proposed technique has two distinctive features. First, the model reduction strategy is tailored to inverse problems: the snapshots used to construct the reduced-order model are computed adaptively from the posterior distribution. Posterior exploration and model reduction are thus pursued simultaneously.
Second, to avoid repeated evaluations of the full-scale numerical model as in a standard MCMC method, we couple the full-scale model and the reduced-order model together in the MCMC algorithm. This maintains accurate inference while reducing its overall computational cost.
In numerical experiments considering steady-state flow in a porous medium, the data-driven reduced-order model achieves better accuracy than a reduced-order model constructed using the classical approach. It also improves posterior sampling efficiency by several orders of magnitude compared to a standard MCMC method.
\end{abstract}

\keywords{model reduction; inverse problem; adaptive Markov chain Monte Carlo; approximate Bayesian inference}

\maketitle

\section{Introduction and motivation}

An important and challenging task in computational modeling is the solution of inverse problems, which convert noisy and indirect observational data into useful characterizations of the unknown parameters of a numerical model. In this process, statistical methods---Bayesian methods in particular---play a fundamental role in modeling various information sources and quantifying the uncertainty of the model parameters \cite{Kaipio_2005,Tarantola_2004}. In the Bayesian framework, the unknown parameters are modeled as random variables and hence can be characterized by their posterior distribution. Markov chain Monte Carlo (MCMC) methods \cite{Liu_2001} provide a powerful and flexible approach for sampling from posterior distributions.
The Bayesian framework has been applied to inverse problems in various fields, for example, geothermal reservoir modeling \cite{CFO_2011}, groundwater modeling \cite{HLH_2003}, ocean dynamics \cite{Mckeague_2005}, remote sensing \cite{Haario_2004}, and seismic inversion \cite{Martin_2012}.

To generate a sufficient number of samples from the posterior distribution, MCMC methods require sequential evaluations of the posterior probability density at many different points in the parameter space.
Each evaluation of the posterior density involves a solution of the forward model used to define the likelihood function, which typically is a computationally intensive undertaking (e.g., the solution of a system of PDEs). 
In this many-query situation, one way to address the computational burden of evaluating the forward model is to replace it with a computationally efficient surrogate.
Surrogate models have been applied to inverse problems in several settings; for example, \cite{Marzouk_2007, Marzouk_2009} employed generalized polynomial chaos expansions, \cite{Bayarri_2009} employed Gaussian process models, and \cite{Galbally_2008,Lipponen_2013,Lieberman_2010,Wang_2005} used projection-based reduced-order models.
In this work, we also focus on projection-based reduced-order models (although the data-driven strategy underlying our approach should be applicable to other types of surrogate models).
A projection-based reduced-order model reduces the computational complexity of the original or ``full'' forward model by solving a projection of the full model onto a reduced subspace.
For the model reduction methods we consider, the construction of the reduced subspace requires evaluating the full model at representative samples drawn from the parameter space.
The solutions of the full model at these samples are referred to as snapshots \cite{Sirovich_1987}. Their span defines the reduced subspace, represented via an orthogonal basis.

The quality of the reduced-order model relies crucially on the choice of the samples for computing the snapshots.
To construct reduced-order models targeting the Bayesian solution of the inverse problem, we employ existing projection-based model reduction techniques. Our innovation is in a data-driven approach that adaptively selects samples from the posterior distribution for the snapshot evaluations.
This approach has two distinctive features:
\begin{enumerate}
\item
We integrate the reduced-order model construction process into an adaptive MCMC algorithm that simultaneously samples the posterior and selects posterior samples for computing the snapshots.
During the sampling process, the numerical accuracy of the reduced-order model is adaptively improved.
\item The approximate posterior distribution defined by the reduced-order model is used to increase the efficiency of MCMC sampling.
We either couple the reduced-order model and the full model together to accelerate the sampling of the full posterior distribution\footnote{The term ``full posterior'' refers to the posterior distribution induced by the original or full forward model.}, or directly explore the approximate posterior distribution induced by the reduced-order model.
In the latter case, sampling the approximate distribution yields a biased Monte Carlo estimator, but the bias can be controlled using error indicators or estimators.
\end{enumerate}

\begin{figure}[h!]
\centering
\includegraphics[width=\textwidth]{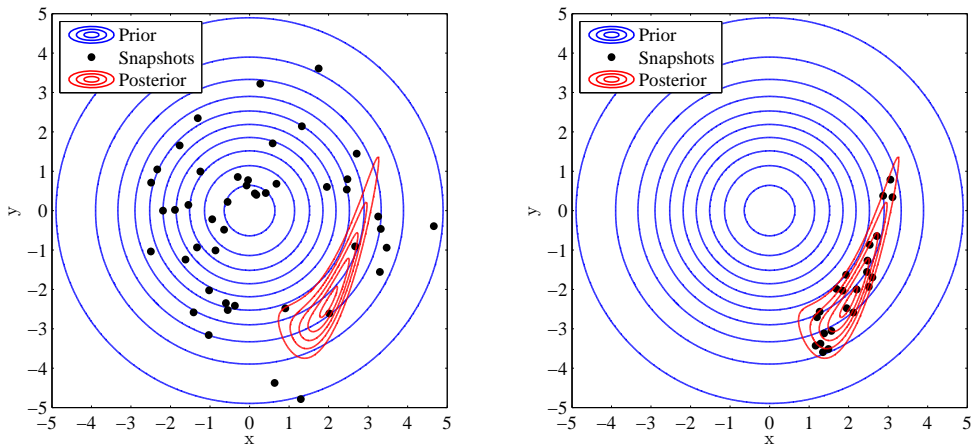}
\caption{A two-dimensional example demonstrating data-driven model reduction. The prior distribution and posterior distribution are represented by the blue contours and red contours, respectively. The black dots represent samples used for computing the snapshots in the reduced-order model construction. Left: sampling using the classical approach (from the prior). Right: our data-driven approach.}
\label{fig:prior_post}
\end{figure}
Compared to the classical offline approaches that build the reduced-order model before using it in the many-query situation, the motivation for collecting snapshots during posterior exploration is to build a reduced-order model that focuses on a more concentrated region in the parameter space.
Because the solution of the inverse problem is unknown until the data are available, reduced-order models built offline have to retain a level of numerical accuracy over a rather large region in the parameter space, which covers the support of the posterior distributions for all the possible observed data sets.
For example, samples used for computing the snapshots are typically drawn from the prior distribution \cite{Lipponen_2013}.
In comparison, our data-driven approach focuses only on the posterior distribution resulting from a particular data set.
As the observed data necessarily increase the information divergence of the prior distribution from the posterior distribution \cite{Cover_2012}, the support of the posterior distribution is 
more compact than that of the prior distribution.

Figure \ref{fig:prior_post} illustrates a simple two-dimensional inference problem, where the prior distribution and the posterior distribution are represented by the blue and red contours, respectively.
The left plot of Figure~\ref{fig:prior_post} shows 50 randomly drawn prior samples for computing the snapshots, each of which has a low probability of being in the support of the posterior. 
In comparison, the samples selected by our data-driven approach, as shown in the right plot of Figure \ref{fig:prior_post}, are scattered within the region of high posterior probability.

By retaining numerical accuracy only in a more concentrated region, the data-driven reduced-order model requires a basis of lower dimension to achieve the same level of accuracy compared to the reduced-order model built offline.
For the same reason, the data-driven model reduction technique can potentially have better scalability with parameter dimension than the offline approach.

We note that goal-oriented model reduction approaches have been developed in the context of PDE-constrained optimization \cite{Arian_2000, Ravindran_2002, KV_2008, Carlberg_2008}, in which the reduced-order model is simultaneously constructed during the optimization process. 
In these methods, the snapshots for constructing the reduced basis are only evaluated at points in the parameter space that are close to the trajectory of the optimization algorithm. 

The remainder of this paper is organized as follows. In Section 2, we outline the Bayesian framework for solving inverse problems and discuss the sampling efficiency and Monte Carlo error of MCMC. In Section 3, we introduce the data-driven model reduction approach, and construct the data-driven reduced-order model within an adaptive delayed acceptance algorithm to speed up MCMC sampling. We also provide results on the ergodicity of the algorithm. Section 4 analyzes some properties of the data-driven reduced-order model. In Section 5, we discuss a modified framework that adaptively constructs the reduced-order model and simultaneously explores the induced approximate posterior distribution. We also provide an analysis of the mean square error of the resulting Monte Carlo estimator. In Section 6, we demonstrate and discuss various aspects of our methods through numerical examples. Section 7 offers concluding remarks.

\section{Sample-based inference for inverse problems}
The first part of this section provides a brief overview of the Bayesian framework for inverse problems.
Further details can be found in \cite{Kaipio_2005, Stuart_2010, Tarantola_2004}.
The second part of this section discusses the efficiency of MCMC sampling for computationally intensive inverse problems.

\subsection{Posterior formulation and sampling}
Given a physical system, let $\vecx \in \mathbb{X} \subset \real^{N_{\rm p}}$ denote the $N_{\rm p}$-dimensional unknown parameter, and $\data \in \mathbb{D} \subset \real^{N_{\rm d}}$ denote the $N_{\rm d}$-dimensional observed data.
The forward model $\vecd = F(\vecx)$ maps a given parameter $\vecx$ to the observable model outputs $\vecd$.

In a Bayesian setting the first task is to construct the prior models and the likelihood function as probability distributions.
The prior density is a stochastic model representing knowledge of the unknown $\vecx$ {\it before}  any measurements and is denoted by $\pi_0(\vecx)$.
The likelihood function specifies the probability density of the observation $\data$ for a given set of parameters $\vecx$, denoted by $L(\data|\vecx)$. 
We assume that the data and the model parameters follow the stochastic relationship
\begin{equation}
\data = F(\vecx) + \vece,
\end{equation}
where the random vector $\vece$ captures the measurement noise and other uncertainties in the observation-model relationship, including model errors. Without additional knowledge of the measurement process and model errors, $\vece$ is modeled as a zero mean Gaussian, $\vece \sim \gauss(0, \Sigma_e)$, where $\Sigma_e$ is the covariance.
Let
\begin{equation}
\Phi(\vecx) = \frac{1}{2}  \| {\Sigma_e}^{-\frac12} \left( F(\vecx) - \data \right) \|^2
\label{eq:misfit}
\end{equation}
denote the data-misfit function.
The resulting likelihood function is proportional to
\(
\exp \left( -\Phi(\vecx) \right ).
\)
By Bayes' formula, the posterior probability density is
\begin{equation}
\label{eq:post}
\pi(\vecx|\data) = \frac{1}{Z} \exp \left( -\Phi(\vecx) \right )  \pi_0(\vecx),
\end{equation}
where
\begin{equation}
Z = \int_\mathbb{X} \exp \left( -\Phi(\vecx) \right ) \pi_0(\vecx) d\vecx.
\label{eq:normal}
\end{equation}
is the normalizing constant.

In the general framework set up by \cite{Grenander_1994}, we can explore the posterior distribution given by (\ref{eq:post}) using MCMC methods such as the Metropolis-Hastings algorithm \cite{Hastings, Metropolis}. The Metropolis-Hastings algorithm uses a proposal distribution and an acceptance/rejection step to construct the transition kernel of a Markov chain that converges to the desired target distribution.

\subsection{Sampling efficiency and Monte Carlo error}
\label{sec:monte_carlo}
Once we draw a sufficient number of samples from the posterior distribution,
the expectations of functions over the posterior distribution can be estimated by Monte Carlo integration.
Suppose we wish to estimate the expectation of a function $h(x)$ over the posterior distribution
\begin{equation}
\label{eq:exp}
I(h) = \int_\mathbb{X} h(\vecx) \pi(\vecx|\data) d\vecx,
\end{equation}
by $N$ posterior samples, $\vecx^{(i)} \sim \pi(\vecx|\data), i = 1,\ldots,N$.
The resulting Monte Carlo estimator of $I(h)$ is
\begin{equation}
\label{eq:mc}
\widehat{I(h)} = \frac{1}{N}\sum_{i=1}^{N} h(\vecx^{(i)}),
\end{equation}
which is an unbiased estimator with the mean square error %
\begin{equation}
\mse \left(\widehat{I(h)}\right) = \var \left(\widehat{I(h)}\right) = \frac{\var(h)}{\ess(h)}.
\end{equation}
Because the samples drawn by an MCMC algorithm are correlated, the variance of $\widehat{I(h)}$ is dependent on the effective sample size %
\begin{equation}
\label{eq:ess}
\ess(h) = \frac{N}{2 \times \iact(h)} ,
\end{equation}
where
\[
\iact(h) = \frac{1}{2} + \sum_{j=1}^{\infty} \corr \left( h\left(\vecx^{(1)}\right), h\left(\vecx^{(j+1)}\right)\right)
\]
is the integrated autocorrelation of $h(x)$; see \cite[Chapter 5.8]{Liu_2001} for a detailed derivation and further references.

In practice, we wish to improve the {\bf computational efficiency} of the MCMC algorithm, which is defined by the effective sample size for a given budget of CPU time. There are two ways to achieve this:
\begin{enumerate}
\item Improve the {\bf statistical efficiency}.
To increase the effective sample size for a given number of MCMC iterations, one can reduce the sample correlation by designing proposals that traverse the parameter space more efficiently. For example, adaptive MCMC methods \cite{Haario_2001, Roberts_2007} use adaptation to learn the covariance structure of the posterior distribution online, and use this information to design proposal distributions. Algorithms such as stochastic Newton \cite{Martin_2012} and Riemannian manifold MCMC \cite{Girolami_2011} use local derivative information to construct efficient proposals that adapt to the local geometry of the posterior distribution.

\item Increase the number of MCMC steps for a given amount of CPU time. Because simulating the forward model in the data-misfit function (\ref{eq:misfit}) is CPU-intensive, computing the posterior density at every iteration is the major bottleneck of the standard MCMC algorithm. By using a fast approximation to the forward model, the amount of computing time can be reduced.
This effort complements the use of MCMC algorithms that require local derivatives, since fast approximations of the forward model also enable fast evaluations of its gradient and even higher-order derivatives. \footnote{Using approximate derivatives does not impact the ergodicity of the MCMC algorithm, since the bias caused by the approximate derivatives is corrected by the acceptance/rejection step in the Metropolis-Hastings algorithm.}
\end{enumerate}

One important remark is that by drawing samples directly from an approximate posterior distribution (i.e., the one induced by our approximation of the forward model), the resulting Monte Carlo estimator (\ref{eq:mc}) is {\bf biased}.
Algorithms such as surrogate transition \cite{Liu_1998} or delayed acceptance \cite{CF_2005} MCMC can be used to couple the posterior distribution and its approximation together to ensure the samples are drawn from the full posterior distribution.
On the other hand, if the accuracy of the posterior approximation can be tightly controlled, then some bias may be acceptable if it enables significant variance reduction, and thus an overall reduction in the mean squared error of an estimate of a posterior expectation, for a given computational effort.
We explore both options in this work.

\section{Data-driven reduced-order model and full target algorithm}
\label{sec:online}

This section introduces our data-driven model reduction approach, and the adaptive sampling framework for simultaneously constructing the reduced-order model and exploring the posterior distribution.

\subsection{Posterior approximation}
Suppose the system of interest can be described by a system of nonlinear steady partial differential equations (PDEs), with a finite element or finite difference discretization that results in a discretized system in the form of
\begin{equation}
\label{eq:forward_a}
\matA(\vecx) \vecu + \vecf(\vecx, \vecu) + \vecq(\vecx) = 0.
\end{equation}
In Equation (\ref{eq:forward_a}), $\vecu \in \real^{n}$ represents the discretized state of the system, $n$ is the dimension of the system (the number of unknowns), $\matA(\vecx) \in \real^{n \times n}$ is a discretized linear operator, $\vecf(\vecx, \vecu) \in \real^{n}$ represents the discretized nonlinear terms of the governing PDE, and $\vecq(\vecx)$ denotes the forcing terms.
All of $\matA$, $\vecf$, and $\vecq$ can be parameterized by the unknown parameter $\vecx$.
An observation operator $\matC$ maps the state of the system to the observable model outputs, i.e.,
\begin{equation}
\label{eq:forward_b}
\vecd = \matC (\vecu, \vecx ).
\end{equation}
Equations (\ref{eq:forward_a}) and (\ref{eq:forward_b}) define the \textit{forward model}, $\vecd = F(\vecx)$, that maps a realization of the unknown parameter $\vecx$ to observable model outputs \vecd.

Our data-driven model reduction approach selects a set of $m$ posterior samples $ \{ \vecx^{(0)}, \ldots, \vecx^{(m)} \} $ to compute the snapshots $ \{ \vecu^{(0)}, \ldots, \vecu^{(m)} \} $ by solving Equation (\ref{eq:forward_a}) for each sample $\vecx^{(j)}$.
By constructing the reduced basis \( \matV_{m} = {\rm span}\{ \vecu^{(0)}, \ldots, \vecu^{(m)}\}\), the state $\vecu$ can be approximated by a linear combination of the reduced basis vectors, i.e., \( \vecu \approx \matV_{m}\vecu_m \).
Then Equation (\ref{eq:forward_a}) can be approximated by applying the Galerkin projection:
\begin{equation}
\label{eq:reduced_a}
\underbrace{{\matV_{m}}^T \matA(\vecx) {\matV_{m}}}_{\matA_m(\vecx)} \vecu_m + {\matV_{m}}^T \vecf(\vecx, {\matV_{m}} \vecu_m) + \underbrace{{\matV_{m}}^T\vecq(\vecx)}_{\vecq_m(\vecx)} = 0,
\end{equation}
and the associated model outputs are
\begin{equation}
\label{eq:reduced_b}
\vecd_m = \matC (\matV_m \vecu_m, \vecx ).
\end{equation}
If ${\rm dim}(\matV_m) \ll n$, the dimension of the unknown state in (\ref{eq:reduced_a}) and (\ref{eq:reduced_b}) is greatly reduced compared to that of the original full system (\ref{eq:forward_a}) and (\ref{eq:forward_b}). Thus (\ref{eq:reduced_a}) and (\ref{eq:reduced_b}) define a reduced-order model $\vecd_m = F_{m} (\vecx)$ that maps a realization of the unknown parameter $\vecx$ to an approximation of the observable model outputs $\vecd_m$.

Care must be taken to ensure efficient solution of the reduced-order model, since in the presence of general parametric dependence, the equations (\ref{eq:reduced_a}) and (\ref{eq:reduced_b}) have low state dimension but are not necessarily fast to solve.
This is because for each new parameter $\vecx$, solution of the reduced-order model requires evaluating the full scale system matrices or residual, projecting those matrices or residual onto the reduced basis, and then solving the resulting low dimensional system.
Since many elements of these computations depend on $n$, the dimension of the original system, this process is typically computationally expensive (unless there is special structure to be exploited, such as affine parametric dependence).
To reduce the computational cost of this process, methods such as the missing point estimation \cite{Astrid_2008}, the empirical interpolation \cite{BMNP_2004} and its discrete variant \cite{Chat_2010} approximate the nonlinear term in the reduced-order model by selective spatial sampling.

We note that for systems that exhibit a wide range of behaviors, recently developed localization approaches such as \cite{HDO_2011, AZF_2012, ES_2012, PBWB_2014} adaptively construct multiple local reduced bases, each tailored to a particular system behavior that is associated with a subdomain of the state space or the parameter space.
Our use of adaptation is different, focusing on the adaptive selection of posterior samples for evaluating the snapshots.
In this paper we consider only the case of a global reduced basis; however, future work could combine our adaptive posterior sampling approach with a localization strategy to build multiple local reduced bases, each adapted to the local structure of the posterior.

The data-misfit function (\ref{eq:misfit}) can be approximated by replacing the forward model $F(\cdot)$ with the reduced-order model $F_{m}(\cdot)$, which gives
\begin{equation}
\Phi_{m}(\vecx) = \frac{1}{2}  \| {\Sigma_e}^{-\frac12} ( F_{m}\left(\vecx) - \data \right) \|^2.
\label{eq:misfit_a}
\end{equation}
Then the resulting approximate posterior distribution has the form
\begin{equation}
\pi_{m}(\vecx|\data) = \frac{1}{Z_{m}} \exp \left( - \Phi_{m}(\vecx) \right ) \pi_0(\vecx),
\label{eq:post_r}
\end{equation}
where $Z_{m} = \int_\mathbb{X} \exp \left( -\Phi_m(\vecx) \right ) \pi_0(\vecx) d\vecx$ is the normalizing constant.

Our data-driven approach adaptively selects the sample for evaluating the next snapshot such that the scaled error of the reduced-order model outputs,
\begin{equation}
\vect_m(\vecx) = {\Sigma_e}^{-\frac12}\left( F(\vecx) - F_m(\vecx) \right),
\label{eq:scaled_error}
\end{equation}
for the current reduced basis $\matV_m$, is above a user-specified threshold.
The whitening transformation ${\Sigma_e}^{-\frac12}$ computes the relative error of the reduced-order model compared with noise level of the observed data, which has standard deviation 1 after the transformation.
In case that we wish to bypass the forward model evaluations, so that the error $\vect_m(\vecx)$ cannot be computed directly, an error indicator or an error estimator,  $\widehat{\vect}_m(\vecx)\approx \vect_m(\vecx)$, can be used.
In this work, we use the dual weighted residual technique \cite{Meyer_2003} to compute an error indicator.

\subsection{Full target algorithm}
\label{sec:full_algo}

To achieve simultaneous model reduction and posterior exploration, we employ the adaptive delayed acceptance algorithm of \cite{Cui_2010}.
Suppose we have a reduced-order model constructed from an initial reduced basis.
At each iteration of the MCMC sampling, we first sample the approximate posterior distribution based on the reduced-order model for a certain number of steps using a standard Metropolis-Hastings algorithm.
This first-stage subchain simulation should have sufficient number of steps, so that its initial state and last state are uncorrelated. 
Then the last state of the subchain is used as a proposal candidate in the second-stage of the algorithm, and the acceptance probability for this candidate is computed based on the ratio of the full posterior density value to the approximate posterior density value. 
This delayed acceptance algorithm employs the first-stage subchain to decrease the sample correlation by sampling the computationally fast approximate posterior, then uses the acceptance/rejection in the second-stage to ensure that the algorithm correctly samples the full posterior distribution.

The statistical efficiency of the delayed acceptance algorithm relies on the accuracy of the approximate posterior.
An approximate posterior induced by an inaccurate reduced-order model can potentially result in a higher second-stage rejection rate, which decreases the statistical efficiency of the delayed acceptance algorithm. 
This is because duplicated MCMC samples generated by rejections increase the sample correlation (we refer to \cite{CF_2005} for a formal justification).
To maintain statistical efficiency, we aim to construct a sufficiently accurate reduced-order model, so that the resulting second-stage acceptance probability is close to $1$.
To achieve this, we introduce adaptive reduced basis enrichment into the delayed acceptance algorithm.

After each full posterior density evaluation, the state of the associated forward model evaluation can be used as a potential new snapshot. %
We compute the scaled error (\ref{eq:scaled_error}) of the reduced-order model outputs at each new posterior sample, and the reduced basis is updated with the new snapshot when the error exceeds a user-given threshold $\epsilon$.
By choosing an appropriate threshold $\epsilon$, we can control the maximum allowable amount of error of the reduced-order model.

The resulting reduced-order model is data-driven, since it uses the information provided by the observed data (in the form of the posterior distribution) to select samples for computing the snapshots.
It is also an online model reduction approach, since the reduced-order model is built concurrently with posterior sampling.

\subsubsection{The algorithm.}

Since the adaptive sampling procedure samples from the full posterior distribution, hereafter we refer to it as the ``full target algorithm.''
Details of the full target algorithm are given in Algorithm \ref{algo:full}.%
\begin{algorithm}[ht]
\caption{Full target algorithm}
\label{algo:full}
\begin{algorithmic}[1]
  \REQUIRE Given the subchain length $L$, the maximum allowable reduced basis dimension $M$, and the error threshold $\epsilon$. At step $n$, given state $X_n = \vecx$, a proposal distribution $q(\vecx, \cdot)$, and a reduced-order model $F_{m}(\cdot)$ defined by the reduced basis $\matV_{m}$, one step of the algorithm is:
  \STATE Set $Y_1 = X_n$, $\vecy = \vecx$, and $i = 1$
  \WHILE{ $i \leq L$ }
  \STATE Propose a candidate $\vecy^{\prime} \sim q(\vecy, \cdot)$, then evaluate the acceptance probability
  \[
  \alpha(\vecy, \vecy^{\prime}) = 1 \wedge \frac{\pi_{m}(\vecy^{\prime}|\data)}{\pi_{m}(\vecy|\data)} \frac{q(\vecy^{\prime}, \vecy)}{q(\vecy, \vecy^{\prime})}
  \]
  	 \IF{${\rm Uniform}(0, 1] < \alpha(\vecy, \vecy^{\prime})$}
  	   \STATE Accept $\vecy^{\prime}$ by setting $Y_{i+1} = \vecy^{\prime}$
  	 \ELSE
  	   \STATE Reject $\vecy^{\prime}$ by setting $Y_{i+1} = \vecy$
  	 \ENDIF
  \STATE $i = i+1$	
  \IF{{\bf finite adaptation criterion} not satisfied \AND $\norm{\widehat{\vect}_{m}(X_{n+1})}_{\infty} \geq \epsilon$}
  \STATE \textbf{break}
  \ENDIF
  \ENDWHILE
  \STATE Set $\vecx^{\prime} = Y_{i}$, then evaluate the acceptance probability using the full posterior
  \[
  \beta(\vecx, \vecx^{\prime}) = 1 \wedge \frac{\pi(\vecx^{\prime}|\data)}{\pi(\vecx|\data)} \frac{\pi_{m}(\vecx|\data)}{\pi_{m}(\vecx^{\prime}|\data)}
  \]
  	 \IF{${\rm Uniform}(0, 1] < \beta(\vecx, \vecx^{\prime})$}
  	   \STATE Accept $\vecx^{\prime}$ by setting $X_{n+1} = \vecx^{\prime}$
  	 \ELSE
  	   \STATE Reject $\vecx^{\prime}$ by setting $X_{n+1} = \vecx$
  	 \ENDIF
  \IF{{\bf finite adaptation criterion} not satisfied \AND $m < M$ \AND $\norm{\vect_{m}(X_{n+1})}_{\infty} \geq \epsilon$}
      \STATE Update the reduced basis $\matV_{m}$ to $\matV_{m+1}$ using the new full model evaluation at $\vecx^{\prime}$ by a Gram-Schmidt process
  \ENDIF
\end{algorithmic}
\end{algorithm}

Lines 1--13 of Algorithm \ref{algo:full} simulate a Markov chain with invariant distribution $\pi_{m}(\vecy|\data)$ for $L$ steps.
In Lines 14--19 of Algorithm \ref{algo:full}, to ensure that the algorithm samples from the full posterior distribution, we post-process the last state of the subchain using a second-stage acceptance/rejection based on the full posterior distribution.

The second-stage acceptance probability $\beta$ is controlled by the accuracy of the reduced-order model and the subchain length $L$.
Ideally, we want to have large $L$ to give uncorrelated samples.
However, if the second-stage acceptance probability $\beta$ is low, the effort spent on simulating the subchain will be wasted because the proposal is more likely to be rejected in the second step.
To avoid this situation, we dynamically monitor the accuracy of the reduced-order model by evaluating its
error indicator at each state of the subchain (Lines 10--12).
The subchain is terminated if the L-infinity norm of the scaled error indicator exceeds a threshold $\epsilon$.

Lines 20--22 of Algorithm \ref{algo:full} describe the adaptation of the reduced basis, which is controlled by the scaled error (\ref{eq:scaled_error}) of the reduced-order model and a given threshold $\epsilon$.
Three criteria must be satisfied for the reduced basis to be updated: ({\romannumeral 1}) the scaled error at $X_{n+1}$ must exceed the threshold $\epsilon$; ({\romannumeral 2}) the dimensionality of the reduced basis must not exceed the maximum allowable dimensionality $M$; and ({\romannumeral 3}) the finite adaptation criterion (Definition~\ref{fin_a}) should not yet be satisfied. The finite adaptation criterion is defined precisely as follows. 

\begin{definition}
\label{fin_a}
{\rm Finite adaptation criterion}. The average number of MCMC steps used for each reduced basis enrichment exceeds $N_{\rm max} = \frac{1}{c \epsilon}$, for some user-specified $c > 0$, where $\epsilon$ is the error threshold used in Algorithm~\ref{algo:full}.
\end{definition}
The finite adaptation criterion is a threshold for how \textit{infrequent} updates to the reduced-order model should become before model adaptation is stopped entirely. When more and more MCMC steps are used between updates, the reduced-order model has satisfied the accuracy threshold $\epsilon$ over larger and larger regions of the posterior. The threshold of at least $N_{\mathrm{max}}$ steps between updates can thus be understood as a measure of ``sufficient coverage'' by the reduced-order model. Further discussion and justification of the finite adaptation criterion is deferred to Section 4. 

Algorithm \ref{algo:full} uses the adaptation in Lines 20--22 to drive the online construction of the reduced-order model. Once the adaptation stops, the algorithm runs as a standard delayed acceptance algorithm, with an $L$-step subchain in the first stage.

To prevent the situation where the reduced basis dimension is large so that the reduced-order model becomes computationally inefficient, we limit the reduced basis dimension to be less than a user-given threshold $M$. 
When the reduced basis dimension reaches $M$ but the finite adaptation criterion is not satisfied, one strategy is to stop the reduced basis enrichment, and continue simulating Algorithm \ref{algo:full}.
This does not affect the ergodicity of Algorithm \ref{algo:full}, as will be discussed in Section \ref{sec:convergence}.
However, the error of the resulting reduced-order model can be large in the subregion of the support of the posterior that remains unexplored. 
Consequently, this large reduced-order model error may decrease the second-stage acceptance rate and the statistical efficiency of the algorithm. 

A wide range of proposal distributions can be used in Algorithm \ref{algo:full}.
In this work, we use the grouped-components adaptive Metropolis \cite{Cui_2010}, which is a variant of adaptive Metropolis \cite{Haario_2001}.
More advanced algorithms such as stochastic Newton \cite{Martin_2012} or manifold MCMC \cite{Girolami_2011} can also be used within Algorithm \ref{algo:full}.
The computational cost of evaluating derivative information for these algorithms can also be reduced by using the reduced-order models.

\subsubsection{Ergodicity of the full target algorithm.}
\label{sec:convergence}

Throughout this work, the following assumption on the forward model and the reduced-order model is used:
\begin{assumption}
\label{assum:1}
The forward model $F(\vecx)$, and the reduced-order model $F_{m}(\vecx)$ are Lipschitz continuous and bounded on $\mathbb{X}$.
\end{assumption}
Since the data-misfit functions $\Phi(\vecx)$ and $\Phi_{m}(\vecx)$ are quadratic, we have $\Phi(\vecx) \geq 0$ and $\Phi_{m}(\vecx) \geq 0$.
Assumption \ref{assum:1} implies that there exists a constant $K > 0$ such that
\(
\Phi(\vecx) \leq K  \; {\rm and} \; \Phi_{m}(\vecx) \leq K, \ \forall \vecx \in \mathbb{X}.
\)
We also note that Lipschitz continuity of $F(\vecx)$ implies the Lipschitz continuity of $\Phi(\vecx)$. Similarly, $\Phi_{m}(\vecx)$ is Lipschitz continuous.

We first establish the ergodicity of a non-adaptive version of Algorithm \ref{algo:full}, where the reduced-order model is assumed to be given and fixed.

\begin{lemma}
\label{thm:3}
In the first stage of Algorithm \ref{algo:full}, suppose the proposal distribution $q(\vecx, \vecy)$ is $\pi$-irreducible. Then the non-adaptive version of Algorithm \ref{algo:full} is ergodic.
\end{lemma}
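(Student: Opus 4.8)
The plan is to show that the non-adaptive version of Algorithm~\ref{algo:full} is a valid Metropolis--Hastings-type sampler for $\pi(\vecx|\data)$ and then invoke standard Markov chain theory to conclude ergodicity. The structure of one step is: run an $L$-step subchain targeting $\pi_m(\cdot|\data)$ (with $F_m$ fixed, so no adaptation and no error-indicator break), take the terminal state $\vecx'$ as a proposal, and accept with probability $\beta(\vecx,\vecx')$. First I would identify the effective proposal kernel of the outer chain: let $Q_m^L(\vecx,\cdot)$ denote the $L$-step transition kernel of the first-stage Metropolis--Hastings chain with proposal $q$ and target $\pi_m(\cdot|\data)$. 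Since that subchain is a genuine Metropolis--Hastings chain, $\pi_m(\cdot|\data)$ is its invariant distribution, and — crucially — it is reversible with respect to $\pi_m(\cdot|\data)$, hence so is its $L$-fold composition $Q_m^L$. Thus $\pi_m(\vecx|\data)\,Q_m^L(\vecx,\vecy) = \pi_m(\vecy|\data)\,Q_m^L(\vecy,\vecx)$.

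Next I would verify that the outer chain, viewed as Metropolis--Hastings with proposal $Q_m^L$ and acceptance ratio $\beta$, has $\pi(\cdot|\data)$ as invariant distribution. Writing the outer transition kernel and checking detailed balance, the key cancellation is
\[
\pi(\vecx|\data)\,Q_m^L(\vecx,\vecy)\,\beta(\vecx,\vecy)
 = \pi(\vecx|\data)\,Q_m^L(\vecx,\vecy)\left(1 \wedge \frac{\pi(\vecy|\data)}{\pi(\vecx|\data)}\frac{\pi_m(\vecx|\data)}{\pi_m(\vecy|\data)}\right),
\]
and using the reversibility of $Q_m^L$ with respect to $\pi_m(\cdot|\data)$ to rewrite $Q_m^L(\vecx,\vecy)/\pi_m(\vecx|\data) = Q_m^L(\vecy,\vecx)/\pi_m(\vecy|\data)$, this expression is symmetric in $\vecx$ and $\vecy$. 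Hence detailed balance holds and $\pi(\cdot|\data)$ is invariant for the outer chain. This is the standard surrogate-transition / delayed-acceptance argument (as in \cite{CF_2005,Liu_1998}), specialized to the fixed reduced-order model.

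It then remains to establish irreducibility and aperiodicity, which together with invariance give ergodicity (convergence in total variation from $\pi$-a.e.\ starting point, by the standard theorem, e.g.\ \cite{Liu_2001}). For irreducibility: $q(\vecx,\vecy)$ is $\pi$-irreducible by hypothesis; since $\Phi_m$ is bounded above by $K$ (Assumption~\ref{assum:1}), the first-stage acceptance probability $\alpha$ is bounded below on any set where the prior ratio is controlled, so the subchain kernel $Q_m^L$ retains $\pi$-irreducibility (it can reach any positive-$\pi$-measure set from anywhere). Likewise $\beta$ is bounded below away from zero whenever $\Phi$ and $\Phi_m$ are bounded, so a move proposed with positive probability is accepted with positive probability; hence the outer chain is $\pi$-irreducible. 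Aperiodicity follows because $\beta < 1$ on a set of positive measure (the chain has a positive probability of staying put), ruling out periodic behavior. Combining invariance, $\pi$-irreducibility, and aperiodicity yields ergodicity.

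The main obstacle I anticipate is the bookkeeping around the subchain kernel $Q_m^L$: one must be careful that it is reversible with respect to $\pi_m(\cdot|\data)$ (reversibility is preserved under composition, unlike for general $\pi$-stationary kernels, so this is the right property to track) and that the $\pi$-irreducibility of the single-step Metropolis--Hastings kernel genuinely transfers to the $L$-step kernel — which it does, but the argument should note that $q$ being $\pi$-irreducible and $\pi, \pi_m, \pi_0$ being mutually absolutely continuous (since $\Phi,\Phi_m$ are bounded) makes $\pi$-irreducibility, $\pi_m$-irreducibility, and $\pi_0$-irreducibility all equivalent here. Everything else is the textbook Metropolis--Hastings verification; no delicate estimates are needed because Assumption~\ref{assum:1} hands us uniform boundedness of the misfit functions.
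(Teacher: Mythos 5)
Your proposal is correct and follows essentially the same route as the paper: the paper likewise establishes detailed balance and aperiodicity of the delayed-acceptance (surrogate-transition) construction by appeal to \cite{CF_2005,Liu_1998}, and deduces irreducibility from the strict positivity of $\exp(-\Phi_m)$ guaranteed by Assumption~\ref{assum:1}; you have simply written out the reversibility computation for the $L$-step subchain kernel that the paper delegates to those references.
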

\begin{proof}
The detailed balance condition and aperiodicity condition are satisfied by Algorithm \ref{algo:full}, see \cite{CF_2005, Liu_1998}.
Since $\exp(-\Phi_{m}(\vecy)) > 0$ for all $\vecy \in \mathbb{X}$ by Assumption \ref{assum:1}, we have $q(\vecx, \vecy) > 0$ for all $\vecx, \vecy \in \mathbb{X}$, thus the irreducibility condition is satisfied, see \cite{CF_2005}.
Ergodicity follows from the detailed balance, aperiodicity, and irreducibility.
\end{proof}

The adaptation used in Algorithm \ref{algo:full} is different from that of standard adaptive MCMC algorithms such as \cite{Haario_2001} and the original adaptive delayed acceptance algorithm \cite{Cui_2010}.
These algorithms carry out an infinite number of adaptations because modifications to the proposal distribution or to the approximation continue throughout MCMC sampling.
The number of adaptations in Algorithm \ref{algo:full}, on the other hand, cannot exceed the maximum allowable dimension $M$ of the reduced basis, and hence it is finite.
Furthermore, the adaptation stops after finite time because of the finite adaptation criterion. Since Algorithm \ref{algo:full} is ergodic for each of the reduced-order models constructed by Lemma \ref{thm:3}, Proposition 2 in \cite{Roberts_2007} ensures the ergodicity of Algorithm \ref{algo:full}. 
Lemma \ref{thm:3} also reveals that Algorithm \ref{algo:full} always converges to the full posterior distribution, regardless of the accuracy of the reduced-order model.

\section{Properties of the approximation}

In Algorithm \ref{algo:full}, the accuracy of the reduced-order model is adaptively improved during MCMC sampling, and thus it is important to analyze the error of the approximate posterior induced by the reduced-order model, compared to the full posterior.
Since bounds on the bias of the resulting Monte Carlo estimator can be derived from the error of the approximate posterior, this analysis is particularly useful in situations where we want to utilize the approximate posterior directly for further  CPU time reduction. 
This error analysis also justifies the finite adaptation criterion (see Definition \ref{fin_a}) used in Algorithm \ref{algo:full} to terminate the enrichment of the reduced basis. 

We provide here an analysis of the Hellinger distance between the full posterior distribution and the approximate posterior distribution
\begin{equation}
\dhell (\pi, \pi_{m}) = \left( \frac{1}{2} \int_\mathbb{X} \left( \sqrt{\pi(\vecx|\data)} - \sqrt{\pi_{m}(\vecx|\data)}\right)^2 d\vecx \right)^{\frac{1}{2}}.
\label{eq:dhell}
\end{equation}
The Hellinger distance translates directly into bounds on expectations, and hence we use it as a metric to quantify the error of the approximate posterior distribution.

The framework set by \cite{Stuart_2010} is adapted here to analyze the Hellinger distance between the full posterior distribution and its approximation induced by the $F_m$ constructed in Algorithm \ref{algo:full}, with respect to a given threshold $\epsilon$.
Given a set of samples $\{\vecx^{(0)}, \ldots, \vecx^{(m)}\}$ where the snapshots are computed, and the associated reduced basis $\matV_{m}$, we define the $\epsilon$-feasible set and the associated posterior measure:
\begin{definition}
\label{def:feasible}
For a given $\epsilon > 0$ and a reduced-order model $F_{m}(\vecx)$, define the $\epsilon$-feasible set as
\begin{equation}
\displaystyle \Omega^{(m)}(\epsilon) = \left\{ \vecx \in \mathbb{X} \mid \| \Sigma_{e}^{-\frac12} ( F(\vecx) - F_{m}(\vecx) ) \|_\infty \leq \epsilon \right\}.
\end{equation}
The set $\Omega^{(m)}(\epsilon) \subseteq \mathbb{X}$ has posterior measure
\begin{equation}
\mu\left(\Omega^{(m)}(\epsilon)\right) = \int_{\Omega^{(m)}(\epsilon)} \pi(\vecx | \data) d\vecx.
\end{equation}
The complement of the $\epsilon$-feasible set is given by $\Omega^{(m)}_\perp(\epsilon) = \mathbb{X} \setminus \Omega^{(m)}(\epsilon)$, which has posterior measure $\mu(\Omega^{(m)}_\perp(\epsilon)) = 1 - \mu\left(\Omega^{(m)}(\epsilon)\right)$.
\end{definition}

On any measurable subset of the $\epsilon$-feasible set, the error of the reduced-order model is bounded above by $\epsilon$.
Then the Hellinger distance (\ref{eq:dhell}) can be bounded by the user-specified $\epsilon$ and the posterior measure of $\Omega^{(m)}_\perp(\epsilon)$, which is the region that has not been well approximated by the reduced-order model.
We formalize this notion though the following propositions and theorems.

\begin{proposition}
\label{prop:1}
Given a reduced-order model $F_{m}(\vecx)$ and some $\epsilon > 0$, there exists a constant $K > 0$ such that
\(
\left| \Phi(\vecx) - \Phi_{m}(\vecx) \right| \leq K \epsilon, \ \forall \vecx \in \Omega^{(m)}(\epsilon).
\)
\end{proposition}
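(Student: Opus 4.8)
The plan is to reduce the claim to a short chain of elementary norm inequalities, using only the definition of the data-misfit functions, the definition of the $\epsilon$-feasible set, and the uniform bounds guaranteed by Assumption~\ref{assum:1}.

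First I would introduce the whitened residuals $\vecr(\vecx) = {\Sigma_e}^{-\frac12}\left( F(\vecx) - \data \right)$ and $\vecr_m(\vecx) = {\Sigma_e}^{-\frac12}\left( F_m(\vecx) - \data \right)$, so that $\Phi(\vecx) = \frac{1}{2}\enorm{\vecr(\vecx)}^2$ and $\Phi_m(\vecx) = \frac{1}{2}\enorm{\vecr_m(\vecx)}^2$, and note that their difference is exactly the scaled error of (\ref{eq:scaled_error}), namely $\vecr(\vecx) - \vecr_m(\vecx) = {\Sigma_e}^{-\frac12}\left( F(\vecx) - F_m(\vecx) \right) = \vect_m(\vecx)$. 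Factoring the difference of two squares and invoking the reverse triangle inequality $\big|\,\enorm{\vecr(\vecx)} - \enorm{\vecr_m(\vecx)}\,\big| \leq \enorm{\vect_m(\vecx)}$ gives, for every $\vecx \in \mathbb{X}$,
\[
\left| \Phi(\vecx) - \Phi_m(\vecx) \right| = \frac{1}{2}\,\big|\,\enorm{\vecr(\vecx)} - \enorm{\vecr_m(\vecx)}\,\big|\,\big(\enorm{\vecr(\vecx)} + \enorm{\vecr_m(\vecx)}\big) \leq \frac{1}{2}\,\enorm{\vect_m(\vecx)}\,\big(\enorm{\vecr(\vecx)} + \enorm{\vecr_m(\vecx)}\big).
\]

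Next I would bound the trailing factor uniformly in $\vecx$. This is where Assumption~\ref{assum:1} is used: as noted just after that assumption, there is a constant $B > 0$ with $\Phi(\vecx) \leq B$ and $\Phi_m(\vecx) \leq B$ for all $\vecx \in \mathbb{X}$, hence $\enorm{\vecr(\vecx)} = \sqrt{2\Phi(\vecx)} \leq \sqrt{2B}$ and likewise $\enorm{\vecr_m(\vecx)} \leq \sqrt{2B}$, so $\enorm{\vecr(\vecx)} + \enorm{\vecr_m(\vecx)} \leq 2\sqrt{2B}$. Combining with the previous display yields $\left| \Phi(\vecx) - \Phi_m(\vecx) \right| \leq \sqrt{2B}\,\enorm{\vect_m(\vecx)}$ on all of $\mathbb{X}$.

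Finally I would restrict to $\vecx \in \Omega^{(m)}(\epsilon)$. By Definition~\ref{def:feasible} we have $\norm{\vect_m(\vecx)}_\infty \leq \epsilon$ there, and the elementary norm equivalence $\enorm{\vecv} \leq \sqrt{N_{\rm d}}\,\norm{\vecv}_\infty$ on $\real^{N_{\rm d}}$ gives $\enorm{\vect_m(\vecx)} \leq \sqrt{N_{\rm d}}\,\epsilon$. Setting $K = \sqrt{2 B N_{\rm d}} > 0$ then gives $\left| \Phi(\vecx) - \Phi_m(\vecx) \right| \leq K \epsilon$ for all $\vecx \in \Omega^{(m)}(\epsilon)$, as claimed. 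The argument has no genuine obstacle; the one point that must not be overlooked is that $\enorm{\vecr(\vecx)} + \enorm{\vecr_m(\vecx)}$ has to be controlled uniformly in $\vecx$ — precisely what the boundedness in Assumption~\ref{assum:1} supplies — since otherwise the constant would depend on $\vecx$ and the bound would fail to hold uniformly over the feasible set.
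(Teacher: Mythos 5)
Your proof is correct and is exactly the argument the paper's one-line proof (``directly follows from the definition of the $\epsilon$-feasible set and Assumption~\ref{assum:1}'') is gesturing at: you use only the boundedness part of Assumption~\ref{assum:1} to control $\enorm{\vecr}+\enorm{\vecr_m}$ uniformly, the reverse triangle inequality on the whitened residuals, and the $\ell_\infty$-to-$\ell_2$ norm equivalence on $\real^{N_{\rm d}}$ to convert the feasible-set bound into $K\epsilon$. No gaps; you have simply written out the details the paper omits.
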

\begin{proof}
This result directly follows from the definition of the $\epsilon$-feasible set and Assumption \ref{assum:1}.
\end{proof}
\begin{proposition}
\label{prop:2}
Given a reduced-order model $F_{m}(\vecx)$ and a $\epsilon > 0$, there exist constants $K_1 > 0$ and $K_2 > 0$ such that
\(
| Z - Z_{m} | \leq K_1 \epsilon + K_2 \mu(\Omega^{(m)}_\perp(\epsilon)).
\)
\end{proposition}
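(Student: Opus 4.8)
The plan is to reduce the difference of normalizing constants to an integral of $\left| e^{-\Phi} - e^{-\Phi_m} \right|$ against the prior, and then to estimate that integral separately over the $\epsilon$-feasible set $\Omega^{(m)}(\epsilon)$ and its complement $\Omega^{(m)}_\perp(\epsilon)$. First I would write
\[
|Z - Z_m| = \left| \int_\mathbb{X} \left( e^{-\Phi(\vecx)} - e^{-\Phi_m(\vecx)} \right) \pi_0(\vecx)\, d\vecx \right| \leq \int_{\Omega^{(m)}(\epsilon)} \left| e^{-\Phi(\vecx)} - e^{-\Phi_m(\vecx)} \right| \pi_0(\vecx)\, d\vecx + \int_{\Omega^{(m)}_\perp(\epsilon)} \left| e^{-\Phi(\vecx)} - e^{-\Phi_m(\vecx)} \right| \pi_0(\vecx)\, d\vecx .
\]

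For the first integral, since $\Phi(\vecx) \ge 0$ and $\Phi_m(\vecx) \ge 0$ and the map $t \mapsto e^{-t}$ is $1$-Lipschitz on $[0,\infty)$, we have $\left| e^{-\Phi(\vecx)} - e^{-\Phi_m(\vecx)} \right| \le \left| \Phi(\vecx) - \Phi_m(\vecx) \right|$, which by Proposition \ref{prop:1} is at most $K\epsilon$ on $\Omega^{(m)}(\epsilon)$. Since $\pi_0$ is a probability density, this contribution is bounded by $K\epsilon$.

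The second integral is where the only real care is needed, because the bound in the statement involves the \emph{posterior} measure $\mu$, not the prior. Here I would use the crude pointwise bound $\left| e^{-\Phi(\vecx)} - e^{-\Phi_m(\vecx)} \right| \le e^{-\Phi(\vecx)} + e^{-\Phi_m(\vecx)} \le e^{-\Phi(\vecx)} + 1$ (using $\Phi_m \ge 0$), and then convert the ``$1$'' into a multiple of $e^{-\Phi(\vecx)}$ via the consequence of Assumption \ref{assum:1} that $\Phi(\vecx) \le K$ for all $\vecx$, hence $1 \le e^{K} e^{-\Phi(\vecx)}$. This gives $\left| e^{-\Phi(\vecx)} - e^{-\Phi_m(\vecx)} \right| \le (1 + e^K) e^{-\Phi(\vecx)}$ pointwise, so
\[
\int_{\Omega^{(m)}_\perp(\epsilon)} \left| e^{-\Phi(\vecx)} - e^{-\Phi_m(\vecx)} \right| \pi_0(\vecx)\, d\vecx \leq (1 + e^K) \int_{\Omega^{(m)}_\perp(\epsilon)} e^{-\Phi(\vecx)}\, \pi_0(\vecx)\, d\vecx = (1 + e^K)\, Z\, \mu\!\left(\Omega^{(m)}_\perp(\epsilon)\right),
\]
the last equality being just the definition of the posterior measure of $\Omega^{(m)}_\perp(\epsilon)$. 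Combining the two pieces yields $|Z - Z_m| \le K\epsilon + (1 + e^K)\, Z\, \mu\!\left(\Omega^{(m)}_\perp(\epsilon)\right)$, so the claim holds with $K_1 = K$ and $K_2 = (1 + e^K)\, Z$; note that $0 < Z \le 1$ is finite (again by Assumption \ref{assum:1}, since $e^{-K} \le e^{-\Phi} \le 1$), so $K_2$ is a genuine finite constant. I do not expect any serious obstacle: the entire argument is elementary, and the only subtlety is the substitution $1 \le e^K e^{-\Phi}$ that lets the prior-weighted integral over the infeasible region be re-expressed through the posterior measure $\mu(\Omega^{(m)}_\perp(\epsilon))$.
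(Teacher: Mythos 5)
Your proposal is correct and follows essentially the same route as the paper's proof: split $|Z-Z_m|$ over $\Omega^{(m)}(\epsilon)$ and its complement, bound the first piece by the Lipschitz property of $t\mapsto e^{-t}$ together with Proposition \ref{prop:1}, and on the complement convert the prior-weighted integrand into a bounded multiple of $e^{-\Phi}\pi_0 = Z\,\pi(\cdot|\data)$ using the uniform bound $\Phi,\Phi_m \leq K$ from Assumption \ref{assum:1}. The only difference is cosmetic: the paper writes the second integrand as $|1-e^{\Phi-\Phi_m}|\,Z\,\pi(\vecx|\data)$ from the outset, while you arrive at the same bound via $|e^{-\Phi}-e^{-\Phi_m}|\leq(1+e^{K})e^{-\Phi}$, and your constants are made explicit.
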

\begin{proof}
From the definition of the $\epsilon$-feasible set, we have a bound on the difference of the normalizing constants:
\begin{eqnarray*}
\left| Z - Z_{m} \right| & \leq & \int_{\Omega^{(m)}(\epsilon)} \left| \exp(- \Phi(\vecx)) - \exp(- \Phi_{m}(\vecx)) \right| \pi_0(\vecx) d\vecx  + \nonumber \\
&& \int_{\Omega^{(m)}_\perp(\epsilon)} \left| 1 - \exp\left(\Phi(\vecx) - \Phi_{m}(\vecx)\right) \right| Z \pi(\vecx|\data) d\vecx \ .
\label{eq:norm_ie}
\end{eqnarray*}
Since $\Phi$ and $\Phi_{m}$ are Lipschitz continuous and greater than zero, we have
\[
 \left| \exp(- \Phi(\vecx)) - \exp(- \Phi_{m}(\vecx)) \right| \leq K_3 \left| \Phi(\vecx) - \Phi_{m}(\vecx) \right|,
\]
for some constant $K_3 > 0$.
Then by Proposition \ref{prop:1}, there exists a constant $K_1 > 0$ such that
\[ \int_{\Omega^{(m)}(\epsilon)} \left| \exp(- \Phi(\vecx)) - \exp(- \Phi_{m}(\vecx)) \right| \pi_0(\vecx) d\vecx \leq K_1 \epsilon.
\]
There also exists a constant $K_2 > 0$ such that
\begin{eqnarray*}
\int_{\Omega^{(m)}_\perp(\epsilon)} \left| 1 - \exp\left( \Phi(\vecx) - \Phi_{m}(\vecx)\right) \right| Z  \pi(\vecx|\data) d\vecx
& \leq & K_2 \int_{\Omega^{(m)}_\perp(\epsilon)} \pi(\vecx|\data) d\vecx \nonumber \\
& = & K_2\mu\left(\Omega^{(m)}_\perp(\epsilon)\right).
\end{eqnarray*}
Thus, we have $| Z - Z_{m} | \leq K_1 \epsilon + K_2 \mu(\Omega^{(m)}_\perp(\epsilon))$.
\end{proof}
\begin{theorem}
\label{thm:2}
Suppose we have the full posterior distribution $\pi(\vecx|\data)$ and its approximation $\pi_{m}(\vecx|\data)$ induced by a reduced-order model $F_{m}(\vecx)$.
For a given $\epsilon > 0$, there exist constants $K_1 > 0$ and $K_2 > 0$ such that
\begin{equation}
\dhell(\pi, \pi_{m}) \leq K_1 \epsilon + K_2 \mu\left(\Omega^{(m)}_\perp(\epsilon)\right).
\end{equation}
\end{theorem}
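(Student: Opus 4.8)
The plan is to bound the Hellinger distance by splitting the integral over $\mathbb{X}$ into the $\epsilon$-feasible set $\Omega^{(m)}(\epsilon)$ and its complement $\Omega^{(m)}_\perp(\epsilon)$, and to control each piece using the tools already assembled: Proposition~\ref{prop:1} (closeness of the misfits on the feasible set), Proposition~\ref{prop:2} (closeness of the normalizing constants), and the uniform bound $0 \le \Phi, \Phi_m \le K$ from Assumption~\ref{assum:1}. First I would write
\[
2\,\dhell(\pi,\pi_m)^2 = \int_{\mathbb{X}} \left( \sqrt{\pi(\vecx|\data)} - \sqrt{\pi_m(\vecx|\data)} \right)^2 d\vecx,
\]
substitute $\pi = Z^{-1} e^{-\Phi}\pi_0$ and $\pi_m = Z_m^{-1} e^{-\Phi_m}\pi_0$, and factor the integrand as $\pi_0(\vecx)\big( Z^{-1/2} e^{-\Phi/2} - Z_m^{-1/2} e^{-\Phi_m/2}\big)^2$.

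On the feasible set, I would add and subtract a common term (e.g.\ $Z^{-1/2} e^{-\Phi_m/2}$) to separate the contribution of the misfit discrepancy from the contribution of the normalizing-constant discrepancy. For the first, $|e^{-\Phi/2} - e^{-\Phi_m/2}| \le \tfrac12 |\Phi - \Phi_m| \le \tfrac12 K\epsilon$ on $\Omega^{(m)}(\epsilon)$ by Proposition~\ref{prop:1} (using that $t \mapsto e^{-t/2}$ is Lipschitz on $[0,\infty)$), and since $\int \pi_0 \le 1$ and $Z$ is a fixed positive constant this piece is $O(\epsilon^2)$, hence $O(\epsilon)$. For the second, $|Z^{-1/2} - Z_m^{-1/2}| \le \tfrac12 \min(Z,Z_m)^{-3/2} |Z - Z_m|$, and $|Z - Z_m| \le K_1\epsilon + K_2\mu(\Omega^{(m)}_\perp(\epsilon))$ by Proposition~\ref{prop:2}; multiplying by $\int e^{-\Phi_m}\pi_0 = Z_m$ (bounded) gives a bound of the form $K_1'\epsilon + K_2'\mu(\Omega^{(m)}_\perp(\epsilon))$, possibly squared but in any case dominated by the first power. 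On the complement, I would simply use the inequality $(a-b)^2 \le 2a^2 + 2b^2$ with $a = \sqrt{\pi}$, $b = \sqrt{\pi_m}$, so that the complement integral is at most $2\int_{\Omega^{(m)}_\perp(\epsilon)}(\pi + \pi_m) \le 2\mu(\Omega^{(m)}_\perp(\epsilon)) + 2\int_{\Omega^{(m)}_\perp(\epsilon)} \pi_m$; here I need to bound the $\pi_m$-mass of $\Omega^{(m)}_\perp(\epsilon)$ by its $\pi$-mass up to a constant, which follows because on that set $\pi_m/\pi = (Z/Z_m) e^{\Phi - \Phi_m} \le (Z/Z_m) e^{K}$ is bounded uniformly (again by Assumption~\ref{assum:1} and the fact that $Z_m$ is bounded below away from $0$).

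Collecting the pieces yields $2\,\dhell(\pi,\pi_m)^2 \le C_1 \epsilon^2 + C_2 \mu(\Omega^{(m)}_\perp(\epsilon))^2 + C_3 \mu(\Omega^{(m)}_\perp(\epsilon)) + \dots$, and then I would use $a^2 + b \le (a + \sqrt{b})^2$ together with $\sqrt{x^2+y} \le x + \sqrt{y}$ (or simply $\sqrt{a+b} \le \sqrt a + \sqrt b$) to take square roots and absorb everything into a bound of the claimed form $K_1\epsilon + K_2\mu(\Omega^{(m)}_\perp(\epsilon))$, relabeling constants. The main obstacle I anticipate is the bookkeeping around the normalizing constants: I must confirm that $Z_m$ is bounded below by a positive constant independent of $m$ (which it is, since $\Phi_m \le K$ gives $Z_m \ge e^{-K}\int\pi_0 = e^{-K} > 0$) so that the negative powers of $Z_m$ appearing from differentiating $s \mapsto s^{-1/2}$ are uniformly controlled; without this uniform lower bound the constants $K_1, K_2$ could blow up as the basis grows. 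A secondary, more cosmetic difficulty is organizing the add-and-subtract step cleanly so the cross terms are handled by Cauchy–Schwarz rather than producing an unwieldy number of cases.
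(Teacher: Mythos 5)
Your proposal follows essentially the same route as the paper: both split the squared Hellinger distance into a misfit-discrepancy term and a normalizing-constant term, bound the former via Proposition~\ref{prop:1} on $\Omega^{(m)}(\epsilon)$ and via uniform boundedness of $\Phi,\Phi_m$ on the complement, and bound the latter via Proposition~\ref{prop:2}; the paper merely performs the add-and-subtract globally first (following Theorem~4.6 of Stuart) before splitting the domain, whereas you split the domain first and treat the complement with $(a-b)^2\le 2a^2+2b^2$ plus a bounded density ratio --- a cosmetic difference. One remark: your final step via $\sqrt{a+b}\le\sqrt{a}+\sqrt{b}$ turns the term that is linear in $\mu(\Omega^{(m)}_\perp(\epsilon))$ into a $\sqrt{\mu(\Omega^{(m)}_\perp(\epsilon))}$ contribution rather than the first power claimed in the statement, but this is precisely the same looseness present in the paper's own concluding rearrangement, so your argument is faithful to the published one.
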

\begin{proof}
Following Theorem 4.6 of \cite{Stuart_2010} we have
\begin{eqnarray*}
\dhell (\pi, \pi_{m}) ^2 & = & \frac{1}{2} \int_\mathbb{X} \left( \sqrt{\frac{1}{Z}\exp(- \Phi(\vecx)) } - \sqrt{\frac{1}{Z_{m}}\exp(- \Phi_{m}(\vecx))} \right)^2  \pi_0(\vecx)d\vecx \\
& \leq & I_1 + I_2,
\end{eqnarray*}
where
\begin{eqnarray*}
I_1 & = & \frac{1}{Z} \int_\mathbb{X} \left( \exp(- \frac{1}{2}\Phi(\vecx)) - \exp(- \frac{1}{2} \Phi_{m}(\vecx)) \right)^2  \pi_0(\vecx) d\vecx , \\
I_2 & = & | Z^{-\frac{1}{2}} - {Z_{m}}^{-\frac{1}{2}}| ^2 Z_{m} .
\end{eqnarray*}
Following the same derivation as Proposition \ref{prop:2} we have
\begin{eqnarray*}
I_1 & \leq & \frac{1}{Z} \int_{\Omega^{(m)}} \left( \exp(- \frac{1}{2}\Phi(\vecx)) - \exp(- \frac{1}{2}\Phi_{m}(\vecx)) \right)^2 \pi_0(\vecx) d\vecx  + \\
&& \int_{\Omega^{(m)}_\perp} \left( 1 - \exp\left( \frac{1}{2}\Phi(\vecx) - \frac{1}{2}\Phi_{m}(\vecx)\right) \right)^2 \pi(\vecx|\data) d\vecx.
\end{eqnarray*}
Thus there exist constants $K_3, K_4 > 0$ such that
\(
I_1 \leq K_3 \epsilon^2 + K_4 \mu(\Omega^{(m)}_\perp(\epsilon)).
\)

Applying the bound
\[
| Z^{-\frac{1}{2}} - {Z_{m}}^{-\frac{1}{2}}|^2 \leq K \max( Z^{-3}, (Z_{m})^{-3}) | Z - Z_{m}|^2,
\]
and Proposition \ref{prop:2}, we have
\[
I_2 \leq \left( K_5 \epsilon +  K_6 \mu\left(\Omega^{(m)}_\perp(\epsilon)\right) \right)^2
\]
for some constants $K_5, K_6 > 0$.

Combining the above results we have
\[
\dhell (\pi, \pi_{m}) ^2 \leq K_3 \epsilon^2 + K_4 \mu\left(\Omega^{(m)}_\perp(\epsilon)\right) + \left( K_5 \epsilon +  K_6 \mu\left(\Omega^{(m)}_\perp(\epsilon)\right) \right)^2.
\]
Since $\epsilon > 0$ and $\mu(\Omega^{(m)}_\perp(\epsilon)) \geq 0$, the above inequality can be rearranged as
\[
\dhell (\pi, \pi_{m}) ^2 \leq \left( K_1 \epsilon +  K_2 \mu\left(\Omega^{(m)}_\perp(\epsilon)\right) \right)^2,
\]
for some constants $K_1 > 0$ and $K_2 > 0$.
\end{proof}

Under certain technical conditions \cite{Bui_2008_2, Patera_2007}, the pointwise error of the reduced-order model decreases asymptotically as the dimensionality of the reduced basis increases. 
Thus the $\epsilon$-feasible set asymptotically grows with the reduced basis enrichment, and hence $\mu(\Omega^{(m)}_\perp(\epsilon))$ asymptotically decays. 
If the posterior distribution is sufficiently well-sampled such that
\begin{equation}
\mu\left(\Omega^{(m)}_\perp(\epsilon)\right) \leq \epsilon,
\label{eq:well_sampled}
\end{equation}
then the Hellinger distance (\ref{eq:dhell}) is characterized entirely by $\epsilon$, as shown in Theorem \ref{thm:2}.
Thus, by adaptively updating the data-driven reduced-order model until the condition (\ref{eq:well_sampled}) is satisfied, we can build an approximate posterior distribution whose error is proportional to the user-specified error threshold $\epsilon$.

In practice, it is not feasible to check the condition (\ref{eq:well_sampled}) within MCMC sampling, so we use heuristics to motivate the finite adaptation criterion in Definition \ref{fin_a}. Consider a stationary Markov chain that has invariant distribution $\pi(\vecx|\data)$. We assume that the probability of visiting $\Omega^{(m)}_\perp(\epsilon)$ is proportional to its posterior measure. 
Suppose we have $\mu(\Omega^{(m)}_\perp(\epsilon)) = \epsilon$, then the probability of the Markov chain visiting $\Omega^{(m)}_\perp(\epsilon)$ is about $c \epsilon$, for some $c > 0$.
In this case, the average number of MCMC steps needed for the next reduced basis refinement is about $\frac{1}{c \epsilon}$.
Since the posterior measure of $\Omega^{(m)}_\perp(\epsilon)$ decays asymptotically with refinement of the reduced basis, the average number of steps needed for the basis refinement asymptotically increases.
Thus we can treat the condition (\ref{eq:well_sampled}) as holding if the average number of steps used for reduced basis refinement exceeds $N_{\rm max} = \frac{1}{c \epsilon}$, and thus terminate the adaptation.
We recommend to choose a small $c$ value, for example $c = 0.1$, to delay the stopping time of the adaptation.
In this way, the adaptive construction process can search the parameter space more thoroughly to increase the likelihood that the condition $\mu(\Omega^{(m)}_\perp(\epsilon)) < \epsilon$ is satisfied.

\section{Extension to approximate Bayesian inference}
\label{sec:approx_MCMC}
The full target algorithm introduced in Section \ref{sec:full_algo} has to evaluate the forward model after each subchain simulation to preserve ergodicity.
The resulting Monte Carlo estimator (\ref{eq:mc}) is unbiased. However, the effective sample size (ESS) for a given budget of computing time is still characterized by the number of full model evaluations.
To increase the ESS, we also consider an alternative approach that directly samples from the approximate posterior distribution when the reduced-order model has sufficient accuracy.

Suppose that the scaled error indicator $\widehat{t}(\vecx)$ provides a reliable estimate of the scaled true error of the reduced-order model.
Then the reliability and the refinement of the reduced-order model can be controlled by the error indicator.
During MCMC sampling, we only evaluate the full model to correct a Metropolis acceptance and to update the reduced-order model when the error indicator exceeds the threshold $\epsilon$.
When a reduced-order model evaluation has error indicator less than $\epsilon$, we treat the reduced-order model as a sufficient approximation to the full model. In this case, decisions in the MCMC algorithm are based on the approximate posterior distribution.

Compared to the full target algorithm that draws samples from the full posterior distribution, this approach only samples from an approximation that mimics the full posterior up to the error threshold $\epsilon$.
We refer to the proposed approach as the ``$\epsilon$-approximate algorithm.''
Even though sampling the approximate posterior distribution will introduce bias to the Monte Carlo estimator (\ref{eq:mc}), this bias may be acceptable if the resulting Monte Carlo estimator has a smaller mean squared error (MSE) for a given amount of computational effort, compared to the standard Metropolis-Hasting algorithm.
In the remainder of this section, we provide details on the algorithm and analyze the bias of the resulting Monte Carlo estimator.

\subsection{$\epsilon$-approximate algorithm}
Algorithm \ref{algo:approx} details the $\epsilon$-approximate algorithm.
During the adaptation, for a proposed candidate $\vecx^\prime$, we discard the reduced-order model evaluation if its error indicator exceeds the new upper threshold $\epsilon_0$.
We set $\epsilon_0 = 1$, which means that Algorithm \ref{algo:approx} does not use the information from the reduced-order model if its estimated error is greater than one standard deviation of the measurement noise.
In this case (Lines 3--6), we run the full model directly to evaluate the posterior distribution and the acceptance probability.
If the error indicator of the reduced-order model is between the lower threshold $\epsilon$ and the upper threshold $\epsilon_0$ (Lines 7--15), the reduced-order model is considered to be a reasonable approximation, and the delayed acceptance scheme is used to make the correction.
If the error indicator is less than the lower threshold $\epsilon$, or if the adaptation is stopped (Lines 17--19), the reduced-order model is considered to be a sufficiently accurate approximation to the full model, and is used to accept/reject the proposal directly.

The adaptation criterion used in Lines 2 of Algorithm \ref{algo:approx} has two conditions: the dimension of reduced basis should not exceed a specified threshold $M$, and the finite adaptation criterion (Definition~\ref{fin_a}) should not yet be triggered.
The reduced basis is updated using the full model evaluations at proposals accepted by MCMC.

When the reduced basis dimension reaches $M$ but the finite adaptation criterion is not satisfied, it is not appropriate to use the $\epsilon$-approximate algorithm for the prescribed error threshold. This is because the large reduced-order model error can potentially result in unbounded bias in the Monte Carlo estimator. 
In this situation, we should instead use the full target algorithm, for which convergence is guaranteed.

\begin{algorithm}[h!]
\caption{$\epsilon$-approximate algorithm}
\label{algo:approx}
\begin{algorithmic}[1]
  \REQUIRE Given the subchain length $L$, the maximum allowable reduced basis dimension $M$, the upper threshold $\epsilon_0$, and the error threshold $\epsilon$. At step $n$, given state $X_n = \vecx$, a proposal $q(\vecx, \cdot)$, and a reduced-order model $F_{m}(\cdot)$ defined by the reduced basis $\matV_{m}$, one step of the algorithm is:
  \STATE Propose $\vecx^{\prime} \sim q(\vecx, \cdot)$, then evaluate the reduced-order model $F_{m}(\vecx^{\prime})$ and  $\widehat{\vect}_m(\vecx^{\prime})$
  \IF{{\bf finite adaptation criterion} is not satisfied \AND $m < M$ \AND $\norm{\widehat{\vect}_m(\vecx^{\prime})}_{\infty} \geq \epsilon$}
  \IF{$\norm{\widehat{\vect}_m(\vecx^{\prime})}_{\infty} \geq \epsilon_0$}%
    \STATE Discard the reduced-order model evaluation, and compute the acceptance probability using the full posterior distribution
  \[
  \alpha(\vecx, \vecx^{\prime}) = 1 \wedge \frac{\pi(\vecx^{\prime}|\data)}{\pi(\vecx|\data)} \frac{q(\vecx^{\prime}, \vecx)}{q(\vecx, \vecx^{\prime})}
  \]
    \STATE Accept/reject $\vecx^{\prime}$ according to ${\rm Uniform}(0, 1] < \alpha(\vecx, \vecx^{\prime})$
  	\STATE Update the reduced basis $\matV_{m}$ to $\matV_{m+1}$ using the new full model evaluation at accepted $\vecx^{\prime}$ by a Gram-Schmidt process
  \ELSE[$\epsilon \leq \norm{\widehat{\vect}_m(\vecx^{\prime})}_{\infty} < \epsilon_0$]
    \STATE Run the delayed acceptance for 1 step, evaluate the acceptance probability
  \[
  \beta_1(\vecx, \vecx^{\prime}) = 1 \wedge \frac{\pi_{m}(\vecx^{\prime}|\data)}{\pi_{m}(\vecx|\data)} \frac{q(\vecx^{\prime}, \vecx)}{q(\vecx, \vecx^{\prime})}
  \]
  	 \IF{${\rm Uniform}(0, 1] < \beta_1(\vecx, \vecx^{\prime})$}
 	   \STATE Run the full model at $\vecx^{\prime}$ to evaluate the full posterior and the acceptance probability
  \[
  \beta_2(\vecx, \vecx^{\prime}) = 1 \wedge \frac{\pi(\vecx^{\prime}|\data)}{\pi(\vecx|\data)} \frac{\pi_{m}(\vecx|\data)}{\pi_{m}(\vecx^{\prime}|\data)}
  \]
      \STATE Accept/reject $\vecx^{\prime}$ according to ${\rm Uniform}(0, 1] < \beta_2(\vecx, \vecx^{\prime})$
  	 \ELSE
  	   \STATE Reject $\vecx^{\prime}$ by setting $X_{n+1} = \vecx$
  	 \ENDIF
      \STATE Update the reduced basis $\matV_{m}$ to $\matV_{m+1}$ as in Line 6
  \ENDIF
  \ELSE[$\norm{\widehat{\vect}_m(\vecx^{\prime})}_{\infty} < \epsilon$ \OR Adaptation is stopped]
    \STATE The reduced-order model is used directly to evaluate the acceptance probability
  \[
  \theta(\vecx, \vecx^{\prime}) = 1 \wedge \frac{\pi_{m}(\vecx^{\prime}|\data)}{\pi_{m}(\vecx|\data)} \frac{q(\vecx^{\prime}, \vecx)}{q(\vecx, \vecx^{\prime})}
  \]
    \STATE Accept/reject $\vecx^{\prime}$ according to ${\rm Uniform}(0, 1] < \theta(\vecx, \vecx^{\prime})$
  \ENDIF
\end{algorithmic}
\end{algorithm}

\subsection{Monte Carlo error of the $\epsilon$-approximate algorithm}
To analyze the performance of the $\epsilon$-approximate algorithm, we compare the MSE of the resulting Monte Carlo estimator with that of a standard single-stage MCMC algorithm that samples the full posterior distribution.

We wish to compute the expectation of a function $h(\vecx)$ over the posterior distribution $\pi(\vecx | \data)$, i.e.,
\begin{equation}
I(h) = \int_\mathbb{X} h(\vecx) \pi(\vecx|\data) d\vecx,
\end{equation}
where the first and second moments of $h(\vecx)$ are assumed to be finite.
Suppose a single-stage MCMC algorithm can sample the full posterior distribution for $N_1$ steps in a fixed amount of CPU time, and $\ess(h)$ effective samples are produced.
The resulting Monte Carlo estimator
\begin{equation}
\widehat{I(h)} = \frac{1}{N_1} \sum_{i = 1}^{N_1} h(\vecx^{(i)}), \quad \vecx^{(i)} \sim \pi(\cdot|\data),
\end{equation}
has MSE
\begin{equation}
\mse\left(\widehat{I(h)}\right) = \frac{\var(h)}{\ess(h)},
\end{equation}
which is characterized by the ESS and the variance of $h$ over $\pi(\vecx | \data)$.

By sampling the approximate posterior distribution, the expectation $I(h)$ can be approximated by
\begin{equation}
I_{m}(h) = \int_\mathbb{X} h(\vecx) \pi_{m}(\vecx|\data) d\vecx.
\end{equation}
Suppose we can sample the approximate posterior for $N_2$ steps in the same amount of CPU time as sampling the full posterior, and that these $N_2$ samples have effective sample size
\begin{equation}
\ess_{m}(h) = S(m) \times \ess(h),
\end{equation}
where $S(m) > 1$ is the speedup factor that depends on the computational expense of the reduced-order model $F_m(\vecx)$.
The Monte Carlo estimator
\begin{equation}
\widehat{I_{m}(h)} = \frac{1}{N_2} \sum_{i = 1}^{N_2} h(\vecx^{(i)}), \quad \vecx^{(i)} \sim \pi_{m}(\cdot|\data),
\end{equation}
has the MSE
\begin{equation}
\mse\left(\widehat{I_{m}(h)}\right) = \frac{\var_{m}(h)}{\ess_{m}(h)} + \bias\left( \widehat{I_{m}(h)} \right)^2,
\end{equation}
where the bias is defined by %
\begin{equation}
\bias\left( \widehat{I_{m}(h)} \right) = I_{m}(h) - I(h).
\end{equation}

We are interested in the situation for which sampling the approximation leads to a smaller MSE than sampling the full posterior distribution, i.e.,
\begin{equation}
\frac{\var_{m}(h)}{\ess_{m}(h)} + \bias\left( \widehat{I_{m}(h)} \right)^2 \leq \frac{\var(h)}{\ess(h)} .
\end{equation}
Rearranging the above inequality gives
\begin{equation}
\ess(h) \leq \tau := \frac{\var(h)}{\bias( \widehat{I_{m}(h)} )^2} \left( 1 - \frac{\var_{m}(h)}{\var(h)} \frac{1}{S(m)} \right) .
\label{eq:ess_bound}
\end{equation}
Equation (\ref{eq:ess_bound}) reveals that when our target ESS for drawing from the full posterior distribution does not exceed the bound $\tau$, sampling the approximate posterior will produce a smaller MSE.
This suggests that the $\epsilon$-approximate algorithm will be more accurate for a fixed computational cost than the single-stage MCMC when the target ESS of the single-stage MCMC satisfies (\ref{eq:ess_bound}).
In such cases, the MSE of the Monte Carlo estimator is dominated by the variance rather than the bias.

The bias is characterized by the Hellinger distance between the full posterior distribution and its approximation (Lemma 6.37 of \cite{Stuart_2010}), i.e.,
\[
\bias\left( \widehat{I_{m}(h)} \right)^2 \leq 4 \left ( \int h(\vecx)^2 \pi(\vecx|\data)d\vecx +  \int h(\vecx)^2 \pi_{m}(\vecx|\data)d\vecx \right) \dhell (\pi, \pi_{m})^2 .
\]
We assume that there exists an $m^{\ast} = m(\epsilon)$ and a set of samples $\{ \vecx^{(0)}, \ldots, \vecx^{(m^{\ast})} \}$ such that the resulting reduced-order model $F_{m^{\ast}}(\vecx)$ satisfies the condition (\ref{eq:well_sampled}), i.e., $\mu(\Omega^{(m^{\ast})}_\perp(\epsilon)) \leq \epsilon$.
Applying Theorem \ref{thm:2}, the ratio of variance to squared bias can be simplified to
\[
\frac{\var(h)}{\bias( \widehat{I_{m(\epsilon)}(h)} )^2} \geq \frac{ K}{ \epsilon^2 },
\]
for some constant $K>0$, and hence we have
\begin{equation}
\tau \geq \frac{ K}{\epsilon^2 } \left( 1 - \frac{\var_{m(\epsilon)}(h)}{\var(h) S(m(\epsilon))} \right) .
\label{eq:ess_bound_3}
\end{equation}
For problems that have reliable error indicators or estimators, the $\epsilon$-approximate algorithm provides a viable way to select the set of $m^{\ast} = m(\epsilon)$ samples for computing the snapshots.
However it is computationally infeasible to verify that the condition (\ref{eq:well_sampled}) holds in practice.
We thus employ the finite adaptation criterion (Definition \ref{fin_a}) to perform a heuristic check on the condition (\ref{eq:well_sampled}), as discussed in Section 4.

The bound $\tau$ is characterized by the user-given error threshold $\epsilon$ and the speedup factor $S(m(\epsilon))$, where $S(m(\epsilon))$ is a problem-specific factor that is governed by the convergence rate and computational complexity of the reduced-order model.
For a reduced-order model such that $S(m(\epsilon)) > \var_{m(\epsilon)}(h)/\var(h)$, there exists a $\tau$ so that the MSE of sampling the approximate posterior for $\tau \times S(m(\epsilon))$ steps will be less than the MSE of sampling the full posterior for the same amount of CPU time.
In the regime where the reduced-order models have sufficiently large speedup factors, the bound $\tau$ is dominated by $\epsilon^2$, and hence decreasing $\epsilon$ results in a higher bound $\tau$.
However there is a trade-off between the numerical accuracy and speedup factors.
We should avoid choosing a very small $\epsilon$ value, because this can potentially lead to a high-dimensional reduced basis and a correspondingly expensive reduced-order model such that $S(m(\epsilon)) < \var_{m(\epsilon)}(h)/\var(h)$, where the  ratio $\var_{m(\epsilon)}(h)/\var(h)$ should be close to one for such an accurate reduced-order model. 
In this case, sampling the approximate posterior can be less efficient than sampling the full posterior.

\section{Numerical results and discussion}
To benchmark the proposed algorithms, we use a model of isothermal steady flow in porous media, which is a classical test case for inverse problems.
\subsection{Problem setup}
Let $D = [0, 1]^2$ be the problem domain, $\partial D$ be the boundary of the domain, and $r \in D$ denote the spatial coordinate.
Let $k(r)$ be the unknown permeability field, $u(r)$ be the pressure head, and $q(r)$ represent the source/sink.
The pressure head for a given realization of the permeability field is governed by
\begin{equation}
\nabla \cdot \left( k(r) \nabla u(r) \right) +  q(r) = 0, \quad r \in D,
\label{eq:forward}
\end{equation}
where the source/sink term $q(r)$ is defined by the superposition of four weighted Gaussian plumes with standard deviation $0.05$, centered at $r=[0.3, 0.3]$, $[0.7, 0.3]$, $[0.7, 0.7]$, $[0.3, 0.7]$, and with weights $\{2, -3, -2, 3\}$.
A zero-flux Neumann boundary condition 
\begin{equation}
k(r) \nabla u(r) \cdot \vec{n}(r)  = 0, \quad  r \in \partial D,
\label{eq:bnd1} 
\end{equation}
is prescribed, where $\vec{n}(r)$ is the outward normal vector on the boundary.
To make the forward problem well-posed, we impose the extra boundary condition
\begin{equation}
\int_{\partial D} u(r) d l(r)= 0.
\label{eq:bnd2}
\end{equation}
Equation (\ref{eq:forward}) with boundary conditions (\ref{eq:bnd1}) and (\ref{eq:bnd2}) is solved by the finite element method with $120\times 120$ linear elements.
This leads to the system of equations (\ref{eq:forward_a}).

In Section \ref{sec:9d}, we use a nine-dimensional example to carry out numerical experiments to benchmark various aspects of our algorithms.
In this example, the spatially distributed permeability field is projected onto a set of radial basis functions, and hence inference is carried out on the weights associated with each of the radial basis functions.
In Section \ref{sec:hd_log}, we apply our algorithms to a higher dimensional problem, where the parameters are defined on the computational grid and endowed with a Gaussian process prior.
Both examples use fixed Gaussian proposal distributions, where the covariances of the proposals are estimated from a short run of the $\epsilon$-approximate algorithm.
In Section~\ref{sec:end_nr}, we offer additional remarks on the performance of the $\epsilon$-approximate algorithm.

\subsection{The 9D inverse problem}
\label{sec:9d}
\begin{figure}[h]
\centering
\includegraphics[width=\textwidth]{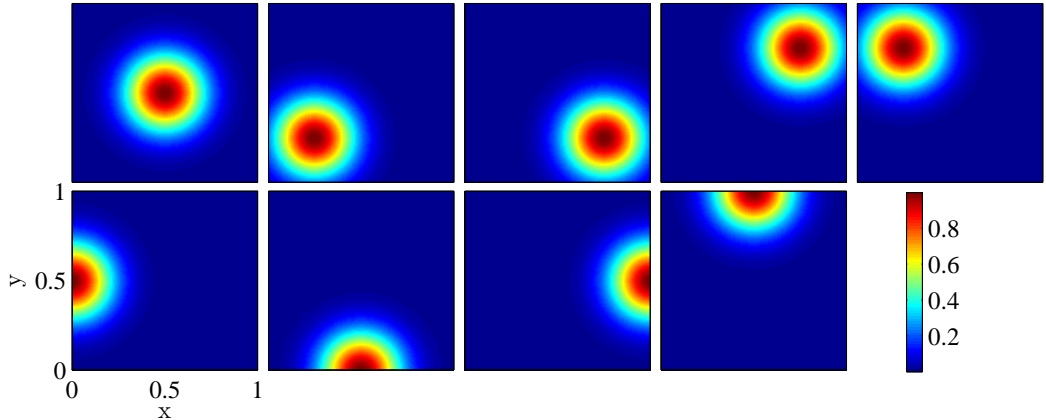}
\caption{Radial basis functions used to define the permeability field in the nine-dimensional example.}
\label{fig:param_9D}
\end{figure}
\begin{figure}[h]
\centering
\includegraphics[width=\textwidth]{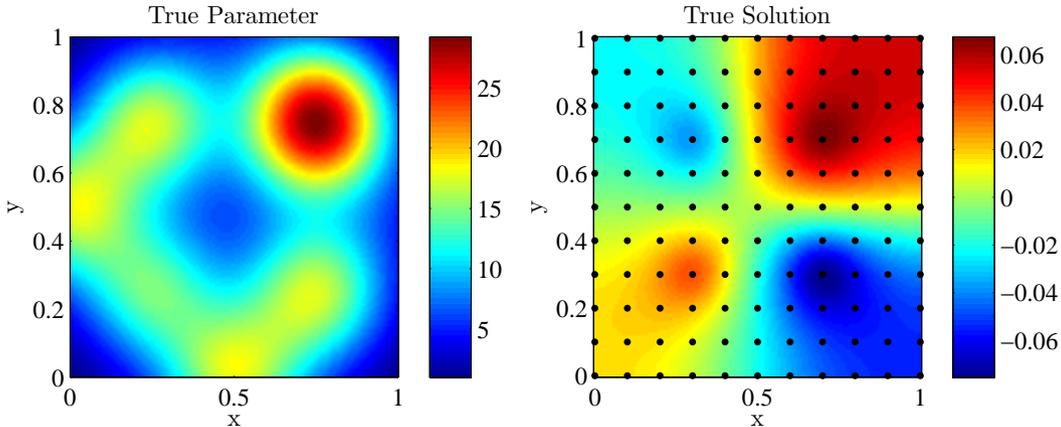}
\caption{Setup of the test case for the nine-dimensional example. Left: the true permeability used for generating the synthetic data sets. Right: the model outputs of the true permeability. The black dots indicate the measurement sensors.}
\label{fig:setup_9D}
\end{figure}
The permeability field is defined by $N_{\rm p} = 9$ radial basis functions:
\begin{equation}
k(r) = \sum_{i = 1}^{N_{\rm p}} b(r; r_i) x_i, \;\;
b(r; r_i) = \exp \left[- 0.5 \left( \frac{\norm{r-r_i}}{0.15} \right)^{2} \right],
\end{equation}
where $r_1, \ldots, r_9$ are the centers of the radial basis functions. These radial basis functions are shown in Figure~\ref{fig:param_9D}.
The prior distributions on each of the weights $x_i$, $i=1,\ldots,9$ are independent and log-normal, and hence we have
\begin{equation}
\pi_0(\vecx) \propto \prod_{i=1}^{N_{\rm p}} \exp\left (- \frac{{\log(x_i)}^2}{2{\sigma_0}^2} \right),
\label{eq:log-normal}
\end{equation}
where $\sigma_0 = 2$ and $N_{\rm p}$ = 9.
The true permeability field used to generate the test data, and the corresponding pressure head are shown in Figure~\ref{fig:setup_9D}. The measurement sensors are evenly distributed over $D$ with grid spacing $0.1$, and the signal-to-noise ratio of the observed data is $50$.

Numerical experiments for various choices of  $\epsilon$ are carried out to test the computational efficiency of both algorithms and the dimensionality of reduced basis in the reduced-order model.
For $\epsilon = \{10^{-1}, 10^{-2}, 10^{-3}\}$, we run the full target algorithm for $10^4$ iterations, with subchain length $L = 50$. 
To make a fair comparison in terms of the number of posterior evaluations, we run the $\epsilon$-approximate algorithm for $5\times 10^5$ iterations, also with $\epsilon = \{10^{-1}, 10^{-2}, 10^{-3}\}$.
For both algorithms, the reduced-order model construction process is started at the beginning of the MCMC simulation. 
We set $c = 10^{-1}$ in the finite adaptation criterion \ref{fin_a}.
As a reference, we run a single-stage MCMC algorithm for $5\times 10^5$ iterations using the same proposal distribution.
The reference algorithm only uses the full posterior distribution.
The first $2000$ samples of the simulations generated by the full target algorithm are discarded as burn-in samples. Similarly, the first $10^5$ samples (to match the number of reduced-order model evaluations in the full target algorithm) are discarded for the simulations using the $\epsilon$-approximate algorithm and the reference algorithm.

In the remainder of this subsection, we provide various benchmarks of our data-driven model reduction approach and the two algorithms. These include:
\begin{enumerate}
\item A comparison of the full target algorithm and the $\epsilon$-approximate algorithm with the reference algorithm.
\item A comparison of the data-driven reduced-order model with the reduced-order model built with respect to the prior distribution.
\item A demonstration of the impact of observed data on our data-driven reduced-order model.
\end{enumerate}

\begin{table}[h]
\caption{Comparison of the computational efficiency of the full target algorithm with $\epsilon = \{10^{-1}, 10^{-2}, 10^{-3}\}$ and the $\epsilon$-approximate algorithm with $\epsilon = \{10^{-1}, 10^{-2}, 10^{-3}\}$ with the reference algorithm. The second step acceptance probability $\beta$ is defined in Algorithm \ref{algo:full}.  The posterior measure of the complement of the $\epsilon$-feasible set, $\mu(\Omega^{(m)}_\perp(\epsilon))$, is given in Definition \ref{def:feasible}.}
\label{table:performance_a}
\centering
\tabsize
\begin{tabular}{L{3cm}L{1.5cm}L{1.5cm}L{1.5cm}L{1.5cm}L{1.5cm}L{1.5cm}L{1.5cm}} \toprule
& \lower.3ex\hbox{Reference} & \multicolumn{3}{c}{\lower.3ex\hbox{Full target}} & \multicolumn{3}{c}{\lower.3ex\hbox{$\epsilon$-approximate}} \\ \midrule
\lower.3ex\hbox{Error threshold $\epsilon$} & \lower.3ex\hbox{-} & \lower.3ex\hbox{$10^{-1}$} & \lower.3ex\hbox{$10^{-2}$} & \lower.3ex\hbox{$10^{-3}$}& \lower.3ex\hbox{$10^{-1}$} & \lower.3ex\hbox{$10^{-2}$} & \lower.3ex\hbox{$10^{-3}$} \\ 
\lower.3ex\hbox{Average $\beta$} & \lower.3ex\hbox{-}  & \lower.3ex\hbox{$0.97$} & \lower.3ex\hbox{$0.98$} & \lower.3ex\hbox{$0.98$} & \lower.3ex\hbox{-} & \lower.3ex\hbox{-} & \lower.3ex\hbox{-} \\ 
\lower.3ex\hbox{Full model evaluations} & \lower.3ex\hbox{$5 \times 10^5$} & \lower.3ex\hbox{$10^4$} & \lower.3ex\hbox{$10^4$} & \lower.3ex\hbox{$10^4$} & \lower.3ex\hbox{$13$} & \lower.3ex\hbox{$33$} & \lower.3ex\hbox{$57$} \\ 
\lower.3ex\hbox{Reduced basis vectors} & - & \lower.3ex\hbox{$14$} & \lower.3ex\hbox{$33$} & \lower.3ex\hbox{$57$} & \lower.3ex\hbox{$13$} & \lower.3ex\hbox{$33$} & \lower.3ex\hbox{$57$} \\ 
\lower.3ex\hbox{CPU time (sec)} & \lower.3ex\hbox{$34470$} & \lower.3ex\hbox{$754$} & \lower.3ex\hbox{$772$} & \lower.3ex\hbox{$814$} & \lower.3ex\hbox{$115$} & \lower.3ex\hbox{$138$} & \lower.3ex\hbox{$187$} \\ 
\lower.3ex\hbox{ESS} & \lower.3ex\hbox{$4709$} & \lower.3ex\hbox{$4122$} & \lower.3ex\hbox{$4157$} & \lower.3ex\hbox{$4471$} & \lower.3ex\hbox{$4672$} & \lower.3ex\hbox{$4688$} & \lower.3ex\hbox{$4834$} \\ 
\lower.3ex\hbox{ESS / CPU time} & \lower.3ex\hbox{$1.4 \times 10^{-1}$} & \lower.3ex\hbox{$5.5$} & \lower.3ex\hbox{$5.4$} & \lower.3ex\hbox{$5.5$} & \lower.3ex\hbox{$40.6$} & \lower.3ex\hbox{$33.9$} & \lower.3ex\hbox{$25.9$} \\ 
\lower.3ex\hbox{speedup factor} & \lower.3ex\hbox{$1$} & \lower.3ex\hbox{$40$} & \lower.3ex\hbox{$39$} & \lower.3ex\hbox{$40$} & \lower.3ex\hbox{$297$} & \lower.3ex\hbox{$248$} & \lower.3ex\hbox{$189$}  \\ 
\lower.3ex\hbox{$\mu(\Omega^{(m)}_\perp(\epsilon))$} & - & \lower.3ex\hbox{$0.1\times 10^{-4}$} & \lower.3ex\hbox{$0.7 \times 10^{-4}$} & \lower.3ex\hbox{$0$} & \lower.3ex\hbox{$1.3\times 10^{-4}$} & \lower.3ex\hbox{$0.8 \times 10^{-4}$} & \lower.3ex\hbox{$0$}  \\ \bottomrule
\end{tabular}
\end{table}

\subsubsection{Computational efficiency.}
\label{sec:epsilon}
Table \ref{table:performance_a} summarizes the number of full model evaluations, the dimensionality of reduced basis, the CPU time, the ESS, and the speedup factor, comparing the full target algorithm and the $\epsilon$-approximate algorithm with the reference algorithm.
For the reduced-order models generated by the adaptive construction process, we provide 
 estimates of the posterior measure of the complement of the $\epsilon$-feasible set, $\mu(\Omega^{(m)}_\perp(\epsilon))$.
We also provide a summary of the average second-stage acceptance probability, $\beta$, for the full target algorithm.

For the full target algorithm, the average second-stage acceptance probabilities for all three $\epsilon$ values are greater than $0.96$ in this test case.
This shows that the reduced-order models produced by all three $\epsilon$ values are reasonably accurate compared to the full model, and hence simulating the approximate posterior distribution in the first stage usually yields the same Metropolis acceptance decision as simulating the full posterior distribution.
As we enhance the accuracy of the reduced-order model by decreasing the value of $\epsilon$, the dimensionality of the resulting reduced basis increases, and thus the reduced-order model takes longer to evaluate.
Because the full target algorithm evaluates the full model for every $50$ reduced-order model evaluations, its computational cost is dominated by the number of full model evaluations.
Thus the speedup factors for all three choices of $\epsilon$ are similar (approximately 40).
Since all three reduced-order models are reasonably accurate here, the efficiency gain of using a small $\epsilon$ value is not significant.
In this situation, one could consider simulating the subchain in the first stage for more iterations (by increasing the subchain length $L$) 
when the value of $\epsilon$ is small.

The $\epsilon$-approximate algorithm produces speedup factors that are $4.7$ to $7.4$ times higher than the speedup factor of the full target algorithm in this test case.
A larger $\epsilon$ value produces a larger speedup factor, because the dimension of the associated reduced basis is smaller.

\begin{figure}[h]
\centering
\includegraphics[width=\textwidth]{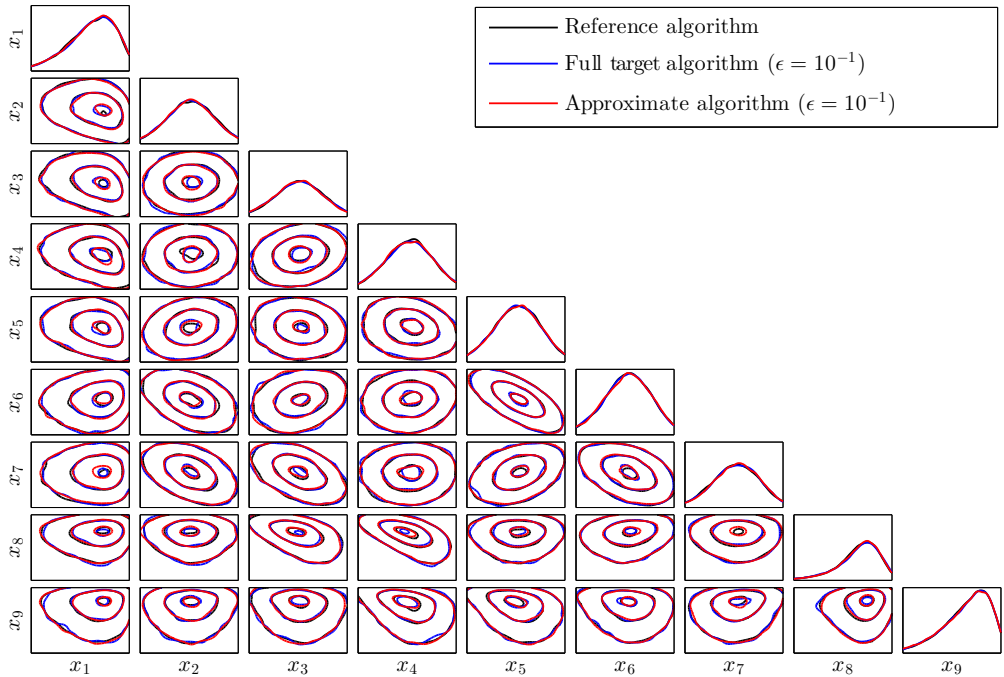}
\caption{The marginal distribution of each component of the parameter $\vecx$, and the contours of the marginal distribution of each pair of components. Black line: the reference algorithm. Blue line: the full target algorithm with $\epsilon = 10^{-1}$. Red line:  the $\epsilon$-approximate algorithm with $\epsilon = 10^{-1}$.}
\label{fig:post_2D}
\end{figure}

To assess the sampling accuracy of the $\epsilon$-approximate algorithm, Figure~\ref{fig:post_2D} provides a visual inspection of the marginal distributions of each component of the parameter $\vecx$, and the contours of the marginal distributions of each pair of components.
The black lines represent the results generated by the reference algorithm, the blue lines represent results of the full target algorithm with $\epsilon = 10^{-1}$, and red lines represent results of the $\epsilon$-approximate algorithm with $\epsilon = 10^{-1}$.
The results from the more accurate simulations that use smaller $\epsilon$ values are not shown, as they are visually close to the case $\epsilon=10^{-1}$.
The plots in Figure \ref{fig:post_2D} suggest that all the algorithms generate similar marginal distributions in this test case.
We note that both the reference algorithm and the full target algorithm sample from the full posterior distribution, and thus the small differences in the contours produced by various algorithms are likely caused by Monte Carlo error.

An alternative way to assess the computational efficiency and sampling accuracy of the $\epsilon$-approximate algorithm is to compare the number of effective samples generated by the $\epsilon$-approximate algorithm and the reference algorithm for a fixed amount of CPU time.
As shown in Table~\ref{table:performance_a}, the $\epsilon$-approximate algorithm with $\epsilon = 10^{-1}$ generates $4672$ effective samples in $115.3$ seconds; the reference algorithm can only generate about $16$ effective samples in the same amount of CPU time.
In the situation where the desired number of effective samples is at least an order of magnitude larger than the speedup factor, using the $\epsilon$-approximate algorithm is clearly advantageous to using the reference algorithm.

For both the full target algorithm and the $\epsilon$-approximate algorithm, and for each choice of $\epsilon$, we use $2\times10^{6}$ samples generated by the reference algorithm to compute the Monte Carlo estimator of the posterior measure of the complement of the $\epsilon$-feasible set for the final reduced-order model produced by the adaptive construction process.
As shown in Table~\ref{table:performance_a}, for all the reduced-order models, we have estimated $\mu(\Omega^{(m)}_\perp(\epsilon)) < \epsilon$.
This suggests that the Hellinger distances between the full posterior distribution and its approximation can be characterized by the $\epsilon$ values in all three cases, and thus the finite adaptation criterion (Definition~\ref{fin_a}) with $c = 10^{-1}$ provides a useful indicator for terminating adaptation.

For $\epsilon = 10^{-1}$, we note that the dimensions of the reduced bases produced by the full target algorithm and the $\epsilon$-approximate algorithm are different.
This is because the snapshots are evaluated at selected samples that are randomly drawn from the posterior. The spread of the sample set slightly affects the accuracy of the reduced-order model.
Nonetheless, both reduced-order models achieve the desirable level of accuracy since the estimated posterior measures $\mu(\Omega^{(m)}_\perp(\epsilon))$ are less than $\epsilon = 10^{-1}$ in this case.

Numerical experiments with $c = 10^{-2}$ and $c = 10^{-3}$ in the finite adaptation criterion \ref{fin_a} are also conducted. 
For both algorithms, choosing these smaller $c$ values leads only to one or two additional basis vectors being added in all the test cases, compared to the case $c = 10^{-1}$. 
The resulting marginal distributions generated by using $c = 10^{-2}$ and $c = 10^{-3}$ are similar to the case $c = 10^{-1}$.
For brevity, the sampling results for these experiments are not reported.
We consistently observe that the number of MCMC steps between adjacent basis enrichments increases as the adaptive construction progresses in these experiments. 
This is expected since the posterior measure $\mu(\Omega^{(m)}_\perp(\epsilon))$ asymptotically decreases with reduced basis enrichment.  
In this situation, choosing a smaller $c$ value leads only to minor increases in both of the numerical accuracy and the computational cost of the reduced-order model.
Thus the sampling accuracy and the overall computational load of both sampling algorithms are not sensitive to the smaller $c$ values in this case. 

\subsubsection{Comparison with a reduced-order model built from the prior.}
\label{sec:accuracy}
\begin{figure}[h]
\centering
\includegraphics[width=\textwidth]{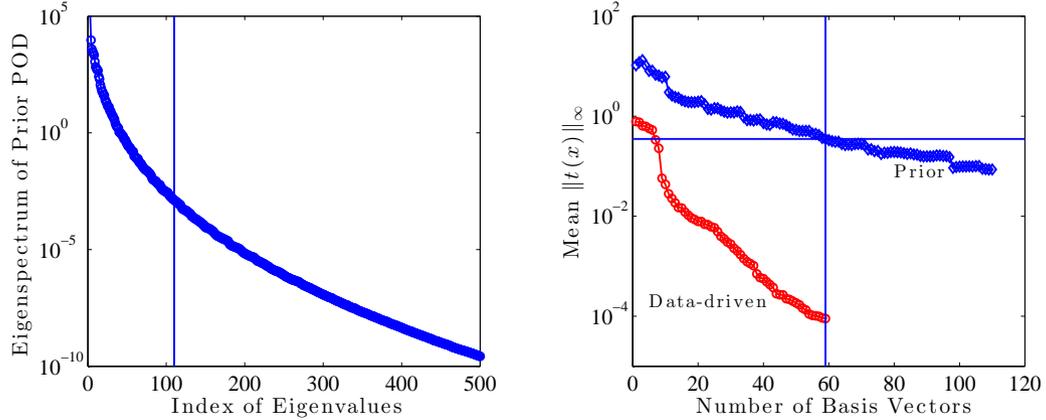}
\caption{Left: The eigenspectrum of the POD basis computed from the prior distribution. The blue line indicates the index for truncating the eigendecomposition. Right: Comparison of the numerical accuracy of the data-driven reduced-order model ($\epsilon = 10^{-3}$) with the reduced-order model built with respect to the prior distribution (\ref{eq:log-normal}). The expectation of the $L_\infty$ norm of the scaled true errors over the full posterior distribution is used as a benchmark. }
\label{fig:err}
\end{figure}

Now we compare the accuracy of the data-driven reduced-order model built with $\epsilon = 10^{-3}$ to that of a reduced-order model constructed with respect to the prior distribution
(\ref{eq:log-normal}).
To construct the reduced-order model with respect to the prior, we use proper orthogonal decomposition (POD). $10^4$ random prior samples are drawn to compute the snapshots. The POD eigenspectrum is shown in the left plot of Figure \ref{fig:err}. The eigendecomposition is truncated when it captures all but $10^{-8}$ energy (relative 2-norm of the POD eigenvalues), leading to 110 reduced basis vectors being retained in the POD basis.

By using posterior samples generated from a separate reference algorithm, we compute the expectation of the $L_\infty$ norm of the scaled true error (\ref{eq:scaled_error}) over the full posterior distribution.
The $L_\infty$ norm of the scaled true error gives the worst-case sensor error; its expectation over the the posterior quantifies the average numerical accuracy of the resulting reduced-order model.
The right plot of Figure~\ref{fig:err} shows this expectation with respect to the dimension of the reduced basis.
For this test case, the data-driven reduced-order model undergoes a significant accuracy improvement once it includes at least 10 reduced basis vectors. The figure shows that the data-driven reduced-order model has a better convergence rate compared to the reduced-order model built from the prior.

\subsubsection{The influence of posterior concentration.}
\label{sec:lkd}

\begin{figure}[h]
\centering
\includegraphics[width=\textwidth]{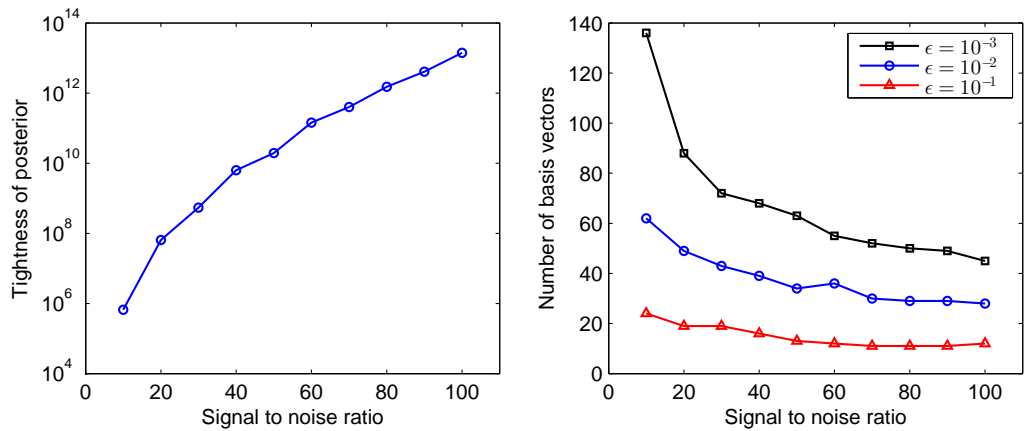}
\caption{Left: The posterior concentration (\ref{eq:tightness}) versus the signal-to-noise ratio of the data. Right: The number of reduced basis vectors versus the signal-to-noise ratio.}
\label{fig:num_basis}
\end{figure}

The amount of information carried in the data affects the dimension of the data-driven reduced-order model, and hence has an impact on its computational efficiency.
By adjusting the signal-to-noise ratio in the observed data, we examine the influence of the posterior concentration on the dimension of the reduced basis.
We gradually increase the signal-to-noise ratio from 10 to 100, and record the number of reduced basis vectors in the reduced-order models.
To quantify the posterior concentration, we use the ``tightness'' of the posterior distribution defined by
\begin{equation}
\prod_{i = 1}^{N_{\rm p}} \frac{\sigma_0(x_i)}{\sigma(x_i)},
\label{eq:tightness}
\end{equation}
where $\sigma(x_i)$ is the standard deviation of the posterior marginal of $x_i$, and $\sigma_0(x_i)$ is the standard deviation of the corresponding prior marginal. In Figure~\ref{fig:num_basis}, we observe that the dimension of the reduced basis decreases as the signal-to-noise ratio increases.
For this test problem, the larger amount of information in the data results in a lower dimensional reduced basis because our approach exploits the increasing concentration of the posterior.

\subsection{The high dimensional inverse problem}
\label{sec:hd_log}
\begin{figure}[h]
\centering
\includegraphics[width=\textwidth]{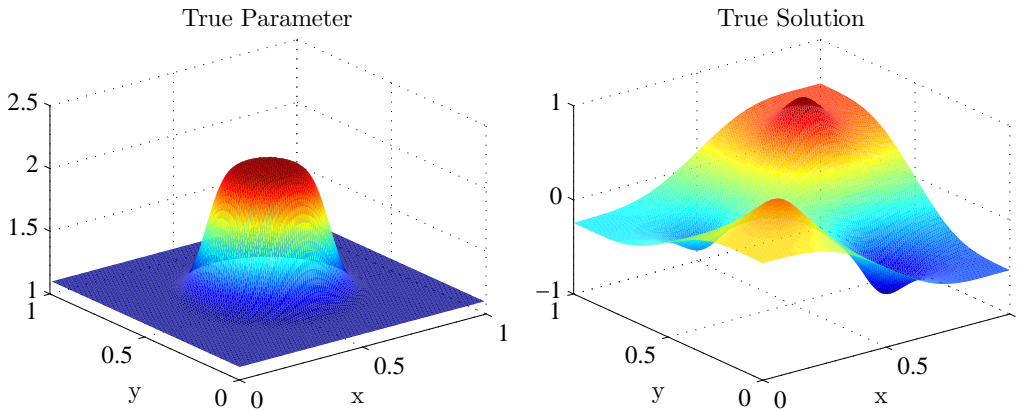}
\caption{Setup of the test case for the high-dimensional example. Left: The true permeability used for generating the synthetic data sets. Right: The model outputs of the true permeability. }
\label{fig:setup_hD_log}
\end{figure}
\begin{figure}[h]
\centering
\includegraphics[width=\textwidth]{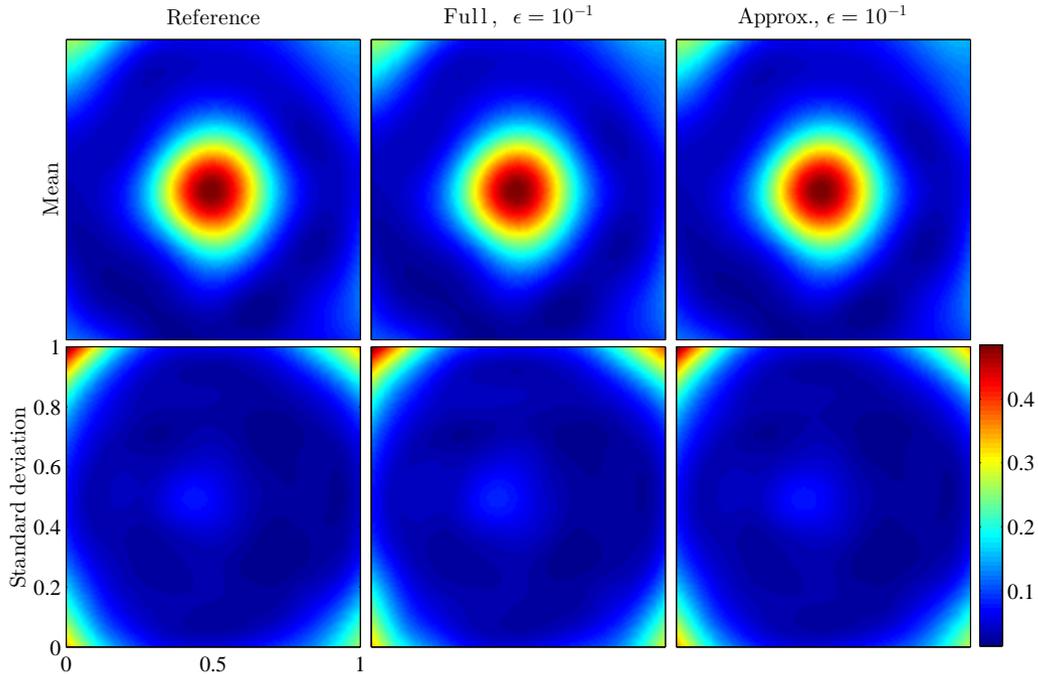}
\caption{Mean (top row) and standard deviation (bottom row) at each spatial location of the permeability field. From left to right: The reference algorithm, the full target algorithm with $\epsilon = 10^{-1}$, and the $\epsilon$-approximate algorithm  with $\epsilon = 10^{-1}$. }
\label{fig:result_hD_log}
\end{figure}
\begin{table}[h]
\caption{Comparison of the computational efficiency of the full target algorithm, the $\epsilon$-approximate algorithm, and the reference algorithm for the high-dimensional problem. $\epsilon = 10^{-1}$ is used in the full target algorithm. For the $\epsilon$-approximate algorithm, three examples with $\epsilon = \{10^{-1}, 10^{-2}, 10^{-3}\}$ are given. The posterior measure of the complement of the $\epsilon$-feasible set, $\mu(\Omega^{(m)}_\perp(\epsilon))$, is given in Definition \ref{def:feasible}.}
\label{table:performance_h_log}
\centering
\tabsize
\begin{tabular}{L{3cm}L{1.5cm}L{1.5cm}L{1.5cm}L{1.5cm}L{1.5cm}} \toprule
& \lower.3ex\hbox{Reference} & \lower.3ex\hbox{Full target} & \multicolumn{3}{c}{\lower.3ex\hbox{$\epsilon$-approximate}}  \\ \midrule 
\lower.3ex\hbox{Error threshold $\epsilon$} & - & \lower.3ex\hbox{$10^{-1}$} & \lower.3ex\hbox{$10^{-1}$} & \lower.3ex\hbox{$10^{-2}$} & \lower.3ex\hbox{$10^{-3}$} \\ \lower.3ex\hbox{full model evaluations} & \lower.3ex\hbox{$5 \times 10^5$} & \lower.3ex\hbox{$10^4$} & \lower.3ex\hbox{62} & \lower.3ex\hbox{$129$} & \lower.3ex\hbox{$209$} \\ 
\lower.3ex\hbox{Reduced basis vectors} & -  & \lower.3ex\hbox{$64$} & \lower.3ex\hbox{62} & \lower.3ex\hbox{$129$} & \lower.3ex\hbox{$209$} \\ 
\lower.3ex\hbox{CPU time (sec)} & \lower.3ex\hbox{$75300$}  & \lower.3ex\hbox{$1011$} & \lower.3ex\hbox{302} & \lower.3ex\hbox{$660$} & \lower.3ex\hbox{$1226$} \\ 
\lower.3ex\hbox{ESS} & \lower.3ex\hbox{$2472$} & \lower.3ex\hbox{$2221$} & \lower.3ex\hbox{$2468$} & \lower.3ex\hbox{$2410$} & \lower.3ex\hbox{2445}  \\ 
\lower.3ex\hbox{ESS / CPU time}  & \lower.3ex\hbox{$3.3\times10^{-2}$} & \lower.3ex\hbox{$ 2.2$} & \lower.3ex\hbox{$8.2$} & \lower.3ex\hbox{$3.7$} & \lower.3ex\hbox{$2.0$} \\ 
\lower.3ex\hbox{speedup factor} & \lower.3ex\hbox{$1$}  & \lower.3ex\hbox{$67$} & \lower.3ex\hbox{$249$} & \lower.3ex\hbox{$111$} & \lower.3ex\hbox{$61$}  \\ 
\lower.3ex\hbox{$\mu(\Omega^{(m)}_\perp(\epsilon))$} & - & \lower.3ex\hbox{$0$} & \lower.3ex\hbox{$0$} & \lower.3ex\hbox{$1.0 \times 10^{-3}$} & \lower.3ex\hbox{$1.6 \times 10^{-4}$} \\ \bottomrule
\end{tabular}
\end{table}
In the high-dimensional example, a log-normal distribution is employed to model the permeabilities as a random field.
Let $r_i, i = 1,\ldots,N_{\rm g}$ denote the coordinates of the $N_{\rm g}$ grid points.
Let $k(r_i) = \exp(x(r_i))$ be the permeability field defined at each grid point, then the latent field $\vecx = [x(r_1), \ldots, x(r_{N_{\rm g}}) ]^T$ follows a Gaussian process prior
\begin{equation}
\pi_0(\vecx) \propto \exp \left ( -\frac{1}{2} \vecx^T \Sigma \vecx \right), \;\;
\Sigma_{ij} = \exp \left( -\frac{| r_i - r_j |^2}{2 s^2}  \right),
\end{equation}
where $s = 0.25$ is used to provide sufficient spatial variability.
After applying the eigendecomposition of the prior covariance,
the parameters are defined on $43$ eigenvectors that preserve $99.99\%$ energy of the prior distribution.
To avoid an inverse crime, we use a ``true'' permeability field that is not directly drawn from the prior distribution. Figure \ref{fig:setup_hD_log} shows the true permeability field and the simulated pressure head. The setup of the measurement sensors is the same as the 9D example in Section~\ref{sec:9d}.

Using the same setting as the 9D case, we simulate the full target algorithm with $\epsilon = 10^{-1}$  for $10^4$ iterations, with subchain length $L = 50$. 
For these full target MCMC simulations, the first $2000$ samples are discarded as burn-in samples.
We simulate the $\epsilon$-approximate algorithm with $\epsilon = \{10^{-1}, 10^{-2}, 10^{-3}\}$, for $5\times 10^5$ iterations.
The single-stage MCMC method is simulated for $5\times 10^5$ iterations as the reference.
For all the $\epsilon$-approximate MCMC simulations and the reference MCMC simulation, the first $10^5$ samples are discarded as burn-in samples. 

Table \ref{table:performance_h_log} summarizes the number of full model evaluations, the number of reduced basis vectors, the CPU time, ESS, and speedup factor.
The speedup factor of the full target algorithm is about 67.
In comparison, the speedup factors of the $\epsilon$-approximate algorithm range from $61$ to $249$.
The speedup factor increases as the error threshold $\epsilon$ increases.
Figure~\ref{fig:result_hD_log} shows the mean and standard deviation at each spatial location of the permeability field, generated from the reference algorithm, the full target algorithm, and the least accurate setting ($\epsilon = 10^{-1}$) of the $\epsilon$-approximate algorithm.
We observe that all algorithms produce similar estimates of mean and standard deviation in this test case.

The $\mu(\Omega^{(m)}_\perp(\epsilon))$ values estimated from samples generated by the reference algorithm for all three $\epsilon$ values are also recorded in Table~\ref{table:performance_h_log}.
In this test example, we have $\mu(\Omega^{(m)}_\perp(\epsilon)) < \epsilon$ for all three $\epsilon$ values, and thus the Monte Carlo estimator provided by the $\epsilon$-approximate algorithm can be 
characterized by the $\epsilon$ values. 
We note that some of the estimated posterior measures $\mu(\Omega^{(m)}_\perp(\epsilon))$ have zero values in Table \ref{table:performance_h_log},  but these values do not necessarily mean that the posterior measures $\mu(\Omega^{(m)}_\perp(\epsilon))$ are exactly zero, because these are Monte Carlo estimates.

\subsection{Remarks on the $\epsilon$-approximate algorithm}
\label{sec:end_nr}
In the first case study, the $\epsilon$-approximate algorithm offers some speedup compared to the full target algorithm (range from $4.7$ to $7.4$).
In the second case study, the speedup factor of the $\epsilon$-approximate algorithm compared to the full target algorithm drops to at most $3.7$ (with $\epsilon = 10^{-1}$) and it performs slightly worse than the full target algorithm for $\epsilon = 10^{-3}$.
The speedup factor of the $\epsilon$-approximate algorithm decreases with $\epsilon$ in both cases.
This result is to be expected, since the computational cost of the reduced-order model depends on the the dimensionality of the reduced basis, which grows as $\epsilon$ decreases.
For $\epsilon = 10^{-3}$ in the second test case, the reduced-order model becomes computationally too expensive relative to the full model, and thus we lose the efficiency gain of the $\epsilon$-approximate algorithm over the full target algorithm.

For problems that require only a limited number of effective samples, using the $\epsilon$-approximate algorithm can be more advantageous.
This is because we can use a relatively large $\epsilon$ value to keep the computational cost of the reduced-order model low, while the resulting MSE is still dominated by the variance of the estimator rather than the bias.
If the goal is to obtain an accurate Monte Carlo estimator, in which the variance of the estimator is small compared to the bias resulting from sampling the approximate posterior distribution, we should use the full target algorithm.
We also note that the accuracy and efficiency of the $\epsilon$-approximate algorithm depend on reliable error indicators or error estimators, while the full target algorithm always samples the full posterior distribution, regardless of the error indicator.

\section{Conclusion}
We have introduced a new data-driven model reduction approach for solving statistical inverse problems. Our approach constructs the reduced-order model using adaptively-selected posterior samples to compute the snapshots. The reduced-order model construction process is integrated into the posterior sampling, to achieve simultaneous posterior exploration and model reduction.

Based on the data-driven reduced-order model, we have also developed two MCMC algorithms to sample the posterior distribution more efficiently than standard full-model MCMC algorithms. The full target algorithm aims to accelerate sampling of the full posterior distribution by coupling the full and approximate posterior distributions together. The $\epsilon$-approximate algorithm samples an approximate posterior distribution, and attempts to reduce the mean squared error of the resulting Monte Carlo estimator, compared to a standard MCMC algorithm.
Both algorithms adaptively construct the reduced-order model online through the MCMC sampling. The full target algorithm preserves ergodicity with respect to the true posterior. The $\epsilon$-approximate algorithm does not sample the full posterior, but can provide further speedups for some problems.

In the case studies, we have demonstrated that both algorithms are able to accelerate MCMC sampling of computationally expensive posterior distributions by up to two orders of magnitude, and that the sampling accuracy of the $\epsilon$-approximate algorithm is comparable to that of a reference full-model MCMC. We have also used the first case study to show the numerical accuracy of the data-driven reduced-order model, compared to a reduced-order model that is built offline with respect to the prior distribution. In this example, for the same number of reduced basis vectors, the posterior-averaged output error of the data-driven reduced-order model is several orders of magnitude smaller than that of the reduced-order model built with respect to the prior. Furthermore, we have demonstrated the impact of the amount of information carried in the observed data on the dimensionality of the reduced basis.

For solving statistical inverse problems, these results suggest that a data-driven reduced-order model is preferable to a reduced-order model built with respect to the prior, especially when the data are informative. Even though our approach is designed for constructing projection-based reduced-order models, the concept of building posterior-oriented surrogate models can be generalized to other approximation approaches such as Gaussian process regression and generalized polynomial chaos.

\acks
The authors thank Florian Augustin, Tan Bui-Thanh, Omar Ghattas, and Jinglai Li for many helpful comments and discussions.
This work was supported by the United States Department of Energy Applied Mathematics Program, Awards DE-FG02-08ER2585 and DE-SC0009297, as part of the DiaMonD Multifaceted Mathematics Integrated Capability Center.

\end{document}